\newcommand{\labitem}[2]{%
\def\@itemlabel{\textbf{#1}}
\item
\def\@currentlabel{#1}\label{#2}}
\newtheorem{theorem}{Theorem}
\newtheorem{lemma}[theorem]{Lemma}								%
\newtheorem{corollary}[theorem]{Corollary}	
\newtheorem{assumption}[theorem]{Assumption}	
\newtheorem{definition}[theorem]{Definition}
\newtheorem{remark}{Remark}
\numberwithin{equation}{section}	
\numberwithin{theorem}{section}
\def\E{\mathbb{E}}		
\def\P{\mathbb{P}}
\def\R{\mathbb{R}}
\def\fcal{\mathcal{F}}
\def\N{\mathbb{N}}
\newcommand{\ind}[1]{\mathbb{I}_{\{#1\}}}
\newcommand{\diag}{\operatorname{diag}}
\def\gamstr{\hat\gamma^{*}}
\def\sigstr{\sigma^I} 
\def\sigM{\sigma^M} 
\newcommand{\dt}{\Delta t}
\newtheorem{example}[theorem]{Example}
\long\def\symbolfootnote[#1]#2{\begingroup\def\thefootnote{\fnsymbol{footnote}}\footnote[#1]{#2}\endgroup}
\begin{document}

\title{A Derivative Pricing Perspective on Liquidity Tokens in Constant Product Market Makers}

\author{
Maxim Bichuch
\thanks{
Department of Mathematics,
SUNY at Buffalo
Buffalo, NY 14260. 
{\tt mbichuch@buffalo.edu}. Work is partially supported by Stellar Development Foundation
Academic Research Grants program. Work  is partially supported by NSF grant DMS- 2420974.}
\and  Zachary Feinstein
\thanks{
School of Business,
Stevens Institute of Technology,
Hoboken, NJ 07030, USA,
{\tt  zfeinste@stevens.edu}. Work is partially supported by Stellar Development Foundation
Academic Research Grants program.}
}
\date{\today}
\maketitle

\begin{abstract}
In decentralized finance, any individual can pool their assets into an automated market maker (AMM) -- herein we focus on the constant product market maker (CPMM) -- in exchange for a claim on  a fraction of future pool assets and fees earned from the market making operations. This position is represented by a liquidity token, whose prevailing on-chain price is effectively the initial deposited assets. Though this price is well-defined, we treat the liquidity token as a derivative position in the prices of the underlying assets for the CPMM in order to deduce risk-neutral pricing and hedging formulas, not dissimilar to the Black-Scholes result. 
Adopting this perspective, in a frictionless environment, hedging the CPMM liquidity token under fair valuation should produce a 
riskless process, which therefore grows at the risk-free rate, something that is not seen in empirical case studies under the prevailing price. 
With our novel pricing formula, we construct a method to calibrate a volatility to data which provides an updated (non-market) valuation which is consistent with the (near-continuous) replication strategy out-of-sample. We conclude with a discussion of novel AMM design considerations motivated by this derivative-pricing perspective. 
\\
{\bf Keywords:} Decentralized Finance, Constant Product Market Maker, Risk-Neutral Pricing and Hedging, Blockchain 
\end{abstract}

\section{Introduction}\label{sec:intro}
Decentralized finance (DeFi) is a novel paradigm which seeks to replace financial intermediaries with smart contracts on the blockchain. 
These contracts have been written to act as, e.g., lending platforms, financial exchanges, and insurance providers.
One of the key innovations of the DeFi approach is that these contracts permit investors to add their own liquidity to the ``intermediary'' for a fraction of the fees collected. These investors are often referred to as liquidity providers (LPs) due to the role they take within the financial system.
Within this work, we entirely focus on automated market makers (AMMs) -- in particular the constant product market maker (CPMM)~\cite{xu2021sok,cartea2022decentralised} of Uniswap v1 and v2~\cite{uniswapv2} (and which can readily be constructed in Uniswap v3~\cite{uniswapv3}) -- which construct decentralized exchanges.


We take the view that the investment of a LP in an AMM (i.e., purchasing a liquidity token) is a path-dependent perpetual derivative which earns a dividend stream (i.e., fees) based on the executed swaps on the AMM and its market price is the cost of the liquidity deposited. 
Immediately, with this viewpoint of the liquidity token, the prevailing pricing structure quoted in the CPMM smart contract reveals itself to be based in the pre-Black-Scholes world. 
For example, when applied to Uniswap data, arbitrage opportunities {arise under self-financing hedging strategies} (see Example~\ref{ex:motivating} below).\footnote{Throughout this work, we consider arbitrage only
in a frictionless environment.}\footnote{In contrast to, e.g.,~\cite{fritsch2024measuring} we consider the accounting profits/losses of the liquidity position rather than in relation to an opportunity cost (the loss-versus-rebalancing in the cited paper).} 
Interestingly, even with the potential for arbitrage profits, when compared to the buy-and-hold strategy of the initial liquidity position, nearly 50\% of LPs lose money on Uniswap~\cite{loesch2021impermanent}. 
As LPs form the backbone of these decentralized exchanges, these losses emphasize the need to introduce hedging strategies for liquidity tokens; if investors were to withdraw liquidity \emph{en masse} (due to the high risk of the investment) then the entire DeFi paradigm would fail its primary task to act as a financial intermediary. 
The goal of this work is to construct risk-neutral pricing and hedging theory for liquidity tokens in the CPMM to bring this DeFi product into the modern financial world.

Similar studies have been undertaken previously that highlight different aspects which we will consider. For example \cite{milionis2022automated,hasbrouck2023economic} recognize that the liquidity tokens payoff behave like that of a call option, with the former also applying the idea of hedging and rebalancing for AMMs in continuous time. The approach of those works was extended to discrete time in \cite{milionis2024automated} and further generalized in \cite{nezlobin2025loss}.
However, as far as we are aware, our approach of expressing this optionality characteristic via the volatility of the price process and defining the implied volatility so as to match the price is wholly novel in the literature. 
For instance, \cite{fukasawa2023weighted} presents an approximating static hedge for impermanent loss in the CPMM using variance and gamma swaps. This approach is extended in \cite{lipton2024unified} to consider the concentrated liquidity framework of Uniswap v3. 
In contrast, \cite{milionis2022automated} introduces the loss-versus-rebalancing which considers the cost of dynamic replication of the underlying pool of assets for generic AMMs; \cite{bichuch2025price} uses the loss-versus-rebalancing to quote a risk-neutral expected fees and deduces implications of that construction. Similarly, \cite{cartea2023predictable} introduces a decomposition of this dynamic replication of the underlying pool of assets which allows the introduction of a so-called ``predictable loss''.
Finally, \cite{dewey2023pricing} considers an empirical pricing of a liquidity token in an AMM using historically calibrated parameters.
Herein, rather than concentrating on the underlying holdings of the CPMM, we will consider the stream of fees as the primary driver of the value of the liquidity token. 
We note that \cite{kuan2022liquidity} considers a different approach to estimate the expected fees collected by a LP in a CPMM.

Our primary contributions and innovations for the pricing and hedging of the liquidity position in a CPMM are threefold. 
First, in treating this liquidity position as a derivative on the underlying assets, we find the optimal execution of the position. That is, we deduce conditions for when a risk-neutral investor would optimally invest in (or withdraw from) the CPMM as a LP.
With this optimal execution, second, we are able to produce a risk-neutral valuation for a liquidity token. As a direct consequence, the Greeks and hedging strategies for this position can be readily constructed. As far as the authors are aware, a formal discussion of the Greeks of the liquidity token has never been undertaken previously. Notably, as nearly 50\% of LPs lose money on Uniswap~\cite{loesch2021impermanent}, the introduction of a hedging strategy is of vital importance.
Third, we bring our pricing and hedging theory to data in order to understand its performance in practice. We find that the prevailing market price for the CPMM liquidity token is inconsistent with the cost of the replicating portfolio in a frictionless environment with frequent rebalancing. 
Bringing the theory to the data, we construct a calibrated arbitrage-free price for the liquidity token. 

The rest of this paper is organized as follows. 
In Section~\ref{sec:motivation}, we provide a brief introduction to the mathematics of AMMs and, applying this construction to Uniswap data, we explore the pricing of a CPMM liquidity token in practice. Notably, within Example~\ref{ex:motivating}, we find that the studied data readily admits arbitrage opportunities.
With this motivation, in Section~\ref{sec:cpmm}, we provide the main mathematical theory for risk-neutral pricing and hedging the liquidity token of a CPMM. 
In Section~\ref{sec:repricing}, we provide mathematical discussions surrounding the market implied volatility and a way to calibrate a fair valuation of the liquidity token that does not readily admit arbitrage opportunities. In doing so, we revisit Example~\ref{ex:motivating} and apply the risk-neutral pricing theory to the data so as to calibrate an arbitrage-free pricing of the Uniswap liquidity token. 
Finally, in Section~\ref{sec:discussion}, we provide novel AMM designs which would eliminate these arbitrage opportunities. In proposing these new constructions, we emphasize potential drawbacks which could occur if implemented in practice.

\section{Motivating Example}\label{sec:motivation}

The primary motivation of this paper is to understand how to hedge an investment into the constant product market maker (CPMM). Within Section~\ref{sec:background}, the basic construction of an automated market maker (AMM) -- and specifically a CPMM -- is provided. With these details, in Section~\ref{sec:motivating-example}, we consider the value of the CPMM and its delta hedging position when being priced at the current market rate. This valuation is provided using Uniswap data to demonstrate that arbitrage opportunities exist in the current market setup. The subsequent sections of this work focus on updating the formulas for pricing and hedging so as to properly eliminate arbitrage opportunities.

\subsection{Background on Automated Market Maker}\label{sec:background}

An AMM is, in brief, a pool of assets against which any individual trader can transact. The key innovation of these types of asset pools within decentralized finance is that they permit investors to add their own assets to the AMM in exchange for a fraction of the fees collected by the AMM.
The most common AMM construction is that of the \emph{constant function market maker} (CFMM) which is defined by a multivariate utility function $u: \R^n_+ \to \R_+$ and the size of the asset pool $\Pi \in \R^n_+$ (such that $u(\Pi) > 0$). As summarized in, e.g.,~\cite{angeris2020improved}, the CFMM then permits trades $\delta \in \R^n$ that do not decrease the CFMM's utility nor does it require more assets than the AMM holds, i.e., $\delta$ is a valid trade if:
\[u(\Pi) \leq u(\Pi + \delta) \quad \text{ and } \quad \Pi + \delta \in \R^n_+.\]
These AMMs are called constant function market makers because, under mild assumptions (see, e.g., \cite{bichuch2022axioms}), the utility before and after a transaction are equal ($u(\Pi) = u(\Pi + \delta)$).

Based on the constant function construction, the marginal price of asset $i$ in terms of the num\'eraire asset $j$ can be determined via the relation $P_i^j(\Pi) = \frac{\partial}{\partial \Pi_i} u(\Pi) / \frac{\partial}{\partial \Pi_j} u(\Pi)$. This mapping is often called the pricing oracle.
As is clear from the construction, $P_i^j(\Pi) = P_j^i(\Pi)^{-1}$ for any $\Pi \in \R^n_+$; this makes clear that there do not exist fees within this construction.
In practice, fees are charged on a fraction of the assets being sold to the AMM, i.e., $\gamma \in (0,1)$ of the incoming assets (asset $i$ such that $\delta_i > 0$) are taken to compensate the market maker for its service as a counterparty. Mathematically this modifies the CFMM construction so that $u(\Pi) = u(\Pi + [I - \gamma\diag(\ind{\delta > 0})]\delta)$. 

Finally, investors are able to deposit their assets into the AMM in exchange for a fraction of the fees collected; these depositors are often referred to as liquidity providers (LPs). This is done following the constant pricing oracle construction so that the prices before and after the liquidity provision remain equal, i.e., $\delta \in \R^n_+$ is deposited if $P_i^j(\Pi) = P_i^j(\Pi + \delta)$ for any pair of assets $(i,j)$; in practice, and as presented explicitly in the preceding equation, no fees are charged on these deposits.
Similarly, LPs can later withdraw their assets at any time taking from the AMM the same fraction of the asset pool that they initially deposited.
The fraction of fees collected by any individual LP is equal to the fraction of the assets that they hold at the AMM. Throughout this work, and as implemented within Uniswap v3 \cite{uniswapv3}, the fees are distributed immediately upon collection to the liquidity providers.

\begin{assumption}\label{ass:2asset}
For the remainder of this work, we will consider the CPMM 
in the $n = 2$ asset setting as is utilized by Uniswap and Sushiswap pools (i.e., $u(x,y) := xy$). In addition, throughout, we take the second asset as the num\'eraire (i.e., $P(x,y) := P_1^2(x,y)$).
\end{assumption}

In practice, AMMs exist as smart contracts that operate in a decentralized manner directly on a blockchain. This means that transactions are only processed at the block-writing times. By construction of the blockchain, this occurs at discrete times. For the Bitcoin blockchain, which utilizes a proof-of-work construction, each block is processed in approximately 10 minute intervals. However, more modern blockchains -- using, e.g., proof-of-stake consensus -- have nearly constant inter-block times $\dt > 0$. For the Ethereum blockchain, the inter-block time is $\dt = 12$ seconds; for the Polygon blockchain, the inter-block time is $\dt = 2$ seconds.

\begin{assumption}\label{ass:block}
For the remainder of this work, we will assume that the inter-block time is fixed at $\dt > 0$. Therefore the realized price process is $P_{i\dt}$ for $i \in \N$.
\end{assumption}

\subsection{Hedging a CPMM}\label{sec:motivating-example}
Consider the CPMM $u(x,y) = xy$. By construction, its pricing oracle $P(x,y) = y/x$ is given by the ratio of the pool's asset holdings. Due to this construction, LPs mint new liquidity tokens (i.e., deposit) by providing assets at the same ratio as the pool currently holds. Uniswap v2~\cite{uniswapv2} defines the number of outstanding liquidity tokens to be $L = \sqrt{xy}$ where $(x,y)$ is are the shares of assets held by the pool.
Notably, for the CPMM, there is a one-to-one relation between the asset holdings of the pool $(x,y)$ and, jointly, the price $P$ and the number of liquidity tokens $L$. Already we have provided how $P,L$ are constructed from the asset holdings; conversely, given the price and liquidity tokens: $x = L / \sqrt{P}$ and $y = L \sqrt{P}$.
In fact, these relations make clear that the value held in the pool is given by $Px + y = 2L \sqrt{P}$.
\begin{remark}
Because of the linear relation between the value of the pool and the value of a liquidity position (i.e., scaled by $L$), throughout the remainder of this work we consider the value of a single liquidity token $L = 1$ except where otherwise explicitly stated.
\end{remark}

Given this value of the liquidity token, we can eliminate entirely the riskiness of the position by trading the underlying token appropriately (see, e.g., the discussion in and preceding \cite[Chapter 2.4]{karatzas1998methods}). Specifically, we can hedge this position by holding $\Delta$ units of the underlying where $\Delta$ is the sensitivity of the position to the token price, i.e., $\Delta = \frac{d}{dP} 2\sqrt{P} = 1/\sqrt{P}$.
In theory, by purchasing and rebalancing this position, we expect to perfectly hedge the liquidity token; however, this hedge may fail during periods of extreme volatility as delta hedging only provides protections along the linearly approximated portfolio. 
In the following numerical example, we take data on a Uniswap pool in order to look for any arbitrage opportunities, i.e., for trends in the hedged position. 
Surprisingly, given the theory on delta hedging, we observe a persistent positive drift in the discounted, frequently-hedged position. 
To preview future results, in the next section we provide a theory of risk-neutral pricing of a liquidity token and deduce the Greeks of this liquidity position; these results are summarized in Table~\ref{tab:greeks}.
Herein, we consider the hedging and rebalancing strategy under negligible fees as is taken in the Black-Scholes framework.

\begin{example}\label{ex:motivating}
Consider a USDC/WETH Uniswap v3 pool on the Polygon blockchain ($\dt = 2$ seconds) with $\gamma = 5bps$ fees between market creation (December 20, 2021) and June 21, 2025 with the investment made over the entire price line to mimic the CPMM.\footnote{This data is taken from smart contract 0x45dda9cb7c25131df268515131f647d726f50608.} 
This pool was chosen as it is a representative Uniswap v3 pool with high liquidity.
For this example, we will assume that the risk-free rate $r = 0\%$  throughout the period of study for simplicity and to match the rate earned on USDC.
In Figure~\ref{fig:motivating-full}, we show the discounted values of both a 1 USDC liquidity position (i.e., $\sqrt{P}/\sqrt{P_0}$ plus the collected fees) and of the delta hedged positions (with rebalancing of the delta hedge at different frequencies ranging from every block to every day). As expected, the volatility of the hedged position is significantly lower than that of the unhedged position. However, as made clear in Figure~\ref{fig:motivating-zoom}, when delta is hedged every block, then the hedged position has a distinct \emph{positive} trend line indicating the presence of an arbitrage opportunity except on distinct dates in which the hedging fails due to large intra-block price swings. We hypothesize that a better hedging strategy (e.g., Gamma hedging) is needed to reduce these jumps. Notably, from February 2022 to March 2024, there is a nearly unbroken stretch of profitable arbitrage opportunities by implementing this delta hedging strategy. As we reduce the hedging frequency, we find a positive trend exists for rehedging every minute as well. 
\begin{figure}[h!]
\centering
\begin{subfigure}[t]{0.45\textwidth}
\centering
\includegraphics[width=\textwidth]{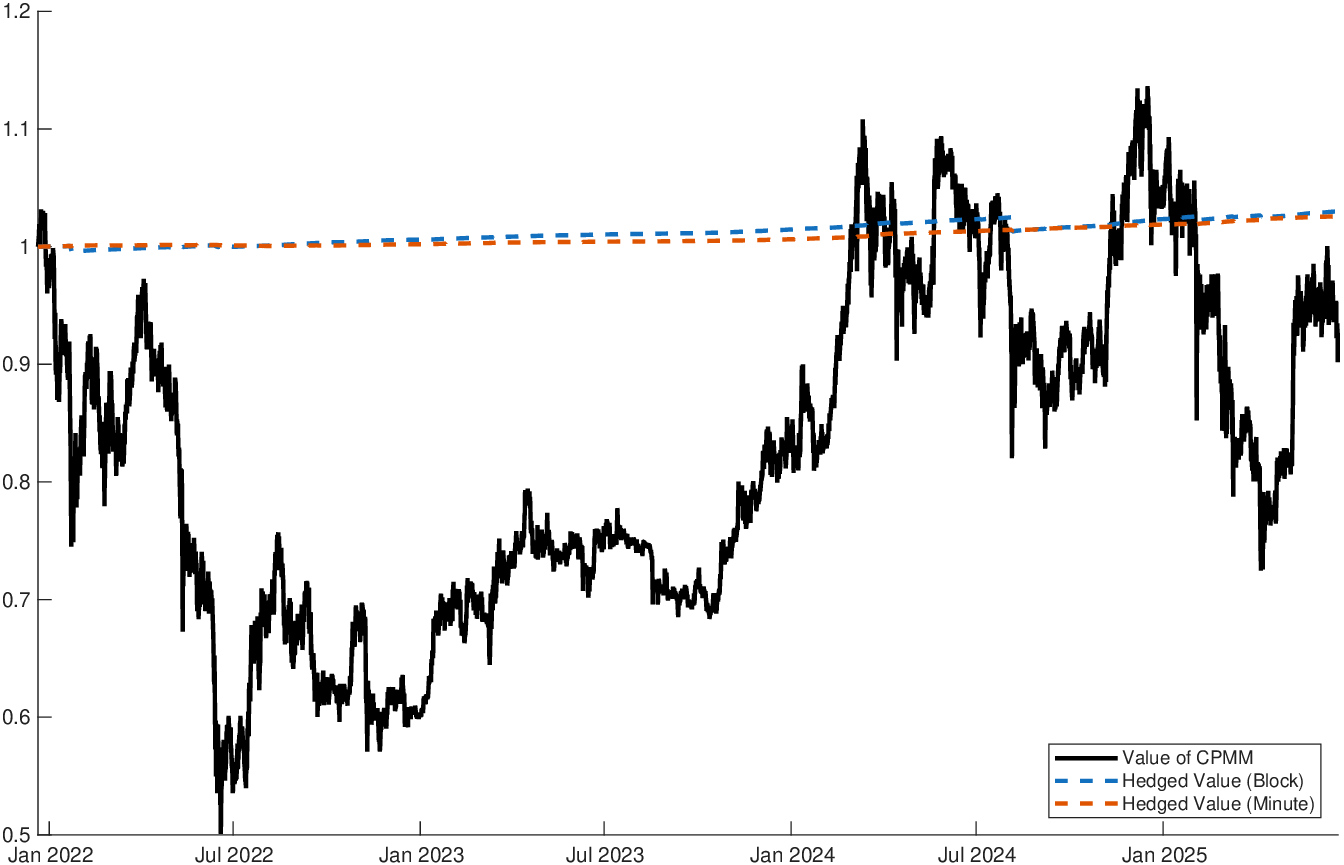}
\caption{Discounted values over time.}
\label{fig:motivating-full}
\end{subfigure}
~
\begin{subfigure}[t]{0.45\textwidth}
\centering
\includegraphics[width=\textwidth]{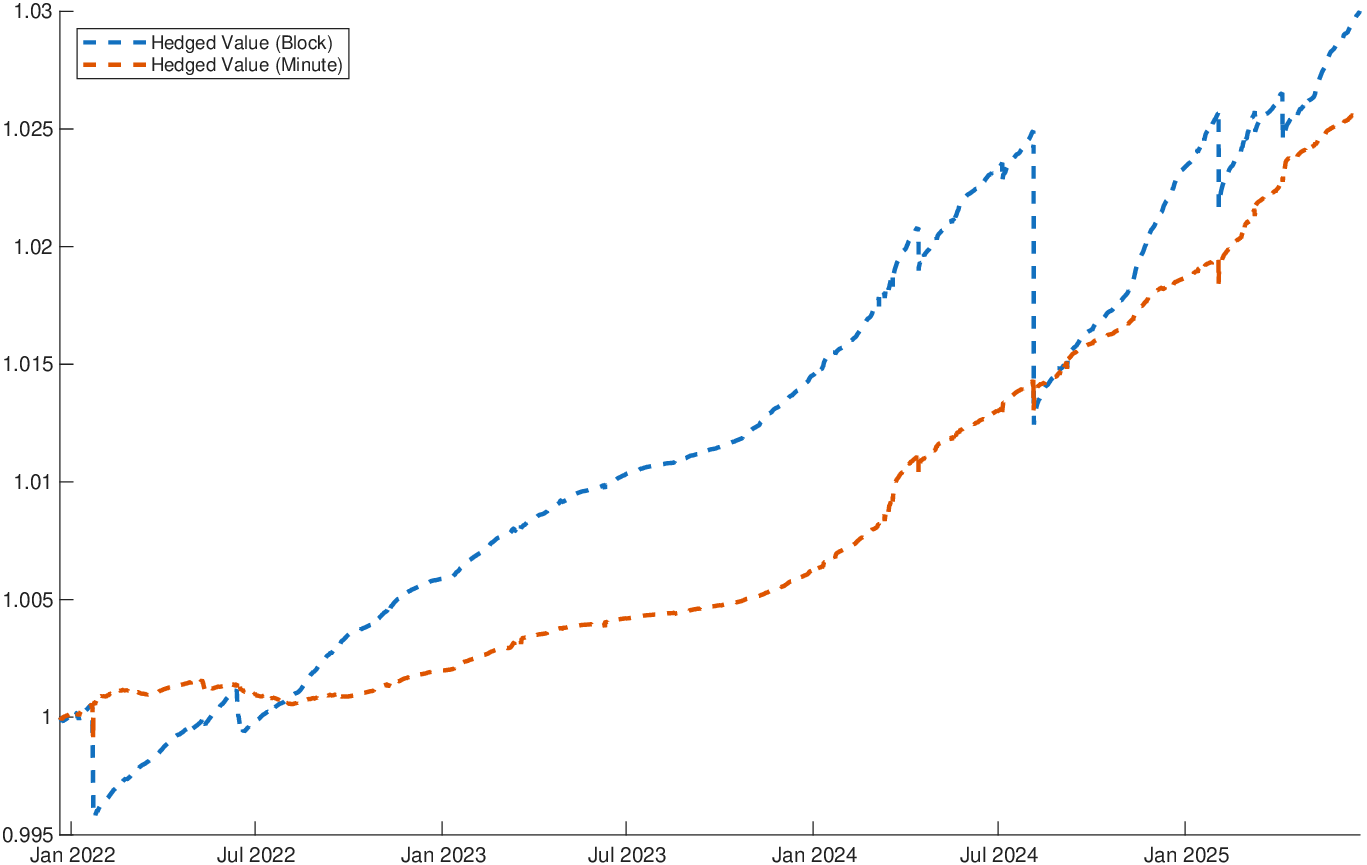}
\caption{Zoomed in image of the discounted delta hedged position.}
\label{fig:motivating-zoom}
\end{subfigure}
\caption{Example~\ref{ex:motivating}: Comparison of the discounted values of a 1 USDC investment in liquidity tokens and its delta hedged position from market creation (December 20, 2021) until June 21, 2025.}
\label{fig:motivating}
\end{figure}
Before continuing, we wish to expand on the sensitivity of this arbitrage to the chosen risk-free rate $r = 0\%$. (All parameters besides the risk-free rate $r$ are specified by the Polygon blockchain or Uniswap contract.) Notably, as the risk-free rate increases, the profitability of this delta hedging strategy decreases due to the time-value of money that is being forfeited by holding USDC. This leads us to conjecture that the profitability would return if a tokenized money market account was used in place of USDC. 
\end{example}

\section{Constant Product Market Making}\label{sec:cpmm}

Within this section, and as discussed in Assumption~\ref{ass:2asset}, we consider the CPMM $u: \R^2_+ \to \R_+$ defined by $u(x,y) = xy$. 
This AMM construction is the most widely reported utility function and has been implemented by, e.g., Uniswap v2 and SushiSwap, and the necessary details are provided in Section~\ref{sec:motivating-example}.
As provided above, we will typically take advantage of the equivalence between the price process $(P_t)$ and the process of asset holdings $(x_t,y_t)$ corresponding to a single liquidity token.
For ease of notation, when given the price process, we define the asset holdings of the pool accordingly $(x_t,y_t) = (1/\sqrt{P_t},\sqrt{P_t})$.

Within this section, first we present the market model for the price process $(P_t)$ in Section~\ref{sec:market}. Then, in Section~\ref{sec:cpmm-value}, we provide a risk-neutral valuation of a single liquidity token for the CPMM. This valuation involves solving an optimal stopping problem to determine when the LP should withdraw her funds from the CPMM pool.
Finally, within Section~\ref{sec:greeks}, we will provide select Greeks for a liquidity token of the CPMM.

\begin{table}[t]
\centering
\begin{tabular}{|l||c|l|}
\hline
\textbf{Valuation} & $V_0(P)$ & $\frac{2\hat\gamma\sqrt{P}}{\gamstr}$ \\ \hline\hline
\textbf{Delta} & $\frac{\partial}{\partial P} V_0(P)$ & $\frac{\hat\gamma}{\gamstr\sqrt{P}}$ \\ \hline
\textbf{Gamma} & $\frac{\partial^2}{\partial P^2} V_0(P)$ & $-\frac{\hat\gamma}{2\gamstr P^{3/2}}$ \\ \hline
\textbf{Vega} & $\frac{\partial}{\partial \sigma} V_0(P)$ & 
$\frac{\hat\gamma \sqrt{P} e^{-\frac{1}{2}(r+\frac{\sigma^2}{4})\dt}}{1 - e^{-\frac{1}{2}(r+\frac{\sigma^2}{4})\dt}} \left[\sqrt{\frac{\dt}{2\pi}}e^{-\frac{r^2\dt}{2\sigma^2}} - \frac{\sigma\dt}{4}\frac{\Phi(\frac{(r+\frac{\sigma^2}{2})\sqrt{\dt}}{\sigma})-e^{-r\dt}\Phi(\frac{(r-\frac{\sigma^2}{2})\sqrt{\dt}}{\sigma})}{1 - e^{-\frac{1}{2}(r+\frac{\sigma^2}{4})\dt}}\right]$
 \\ \hline
\end{tabular}
\caption{Summary table of valuation (Theorem~\ref{thm:v2}) and Greeks (Section~\ref{sec:greeks}) of a single liquidity token in a CPMM when $\hat\gamma \geq \gamstr$ as defined in~\eqref{eq:gamstr}.}
\label{tab:greeks}
\end{table}

\subsection{Market Model}\label{sec:market}

As discussed above in Section~\ref{sec:motivating-example}, the cost of constructing a single liquidity token within a CPMM is $P_0 x_0 + y_0 = 2\sqrt{P_0}$ at price of $P_0 > 0$. Once constructed, and following Assumption~\ref{ass:block}, the liquidity position of the CPMM is, simply, a perpetual Bermudan option. That is, it is a derivative of the price process $(P_t)$ that can be exercised at any block but can continue indefinitely until it is exercised. Until exercise, the AMM disperses fees to the amount of $\frac{\gamma}{1-\gamma}[P_{i\dt}(x_{i\dt}-x_{(i-1)\dt})^+ + (y_{i\dt} - y_{(i-1)\dt})^+]$ at block $i \in \N$, i.e., at time $i\dt$, for fee level $\gamma \in (0,1)$. By construction, exactly one of the terms $(x_{i\dt}-x_{(i-1)\dt})$ and $(y_{i\dt}-y_{(i-1)\dt})$ is positive. Throughout the remainder of this work, we take $\hat\gamma := \frac{\gamma}{1-\gamma}$ to be the fraction of the change in the pool reserves collected by the LPs. 

\begin{assumption}\label{ass:gbm}
For the remainder of this work, following \cite{cao2025structural}, we will assume that the price process $(P_t)$ follows the risk-neutral geometric Brownian motion
\begin{align}
dP_t = P_t [r dt + \sigma dW_t]
\label{eq:dP}
\end{align}
for risk-free rate $r \geq 0$, volatility $\sigma > 0$, Brownian motion $W$ on a filtered probability space $(\Omega,\fcal,(\fcal_t^W)_{t\ge0},\P)$, with the filtration $(\fcal_t^W)_{t\ge0}$ generated by the Brownian motion,
and initial value $P_0$. 
\end{assumption}
\begin{remark}
As highlighted within Assumption~\ref{ass:gbm}, the price process considered herein is consistent with that in~\cite{cao2025structural}. Specifically, that work considers a terminal, log-normally distributed price and calculate the resulting equilibrium from a system of optimizing agents (traders and LPs). While \cite{cao2025structural} considers a static setting, we extend to the dynamic setting following the same general direction: we assume traders actualize this log-normal distribution in the AMM.
\end{remark}
%
\begin{remark}
Following Assumption~\ref{ass:gbm}, the measure $\P$ is a risk-neutral measure. In fact, due to the completeness of the constructed market, $\P$ is the unique risk-neutral measure.\footnote{We refer the interested reader to \cite{harrison1981martingales,delbaen1994general} for the discussion on the Fundamental Theorems of Asset Pricing and to \cite[Chapters 1 and 2]{karatzas1998methods} that show the existence of a hedge in a complete market driven by a Brownian Motion as well as pricing by replication in such a market.}
Throughout the remainder of this work, we will exploit this fact in order to consider the pricing and hedging of a liquidity token in a CPMM.
\end{remark}
\begin{remark}
Implicitly, when constructing the fees, we are assuming there is only one transaction in each block to align the price of the CPMM to $P$. When studying the valuation of the CPMM liquidity token in Section~\ref{sec:cpmm-value} below, this assumption guarantees that we find a \emph{lower bound} on the risk-neutral value as other (uninformed) trades may also occur; such trades increase the fees collected by the LPs without altering the fundamental price process. 
\end{remark}
\begin{remark}
We assume that any strategy that holds the num\'eraire asset (the second asset) instantaneously deposits it into the money market account so as to earn the risk-free rate $r$. We stress that this applies only to strategies as opposed to the liquidity provided to the CPMM pool over which the LP has no control.
\end{remark}

\subsection{Risk-Neutral Valuation}\label{sec:cpmm-value}
Within this section, our goal is to quantify the risk-neutral price for a single liquidity token of the CPMM. Due to the perpetual Bermudan option construction of the liquidity position, the value of the LP position is the maximum of either withdrawing at that price (i.e., $2\sqrt{P}$) or the discounted expectation of continuing for another block. Mathematically, under Assumption~\ref{ass:gbm}, this is provided by the value function $V_0: \R_{++} \to \R_{++}$ defined by
\begin{align}
V_0(P) &= \max\{2\sqrt{P} \; , \; e^{-r\dt}\E[V_0(Pe^{(r-\frac{\sigma^2}{2})\dt + \sigma B_{\dt}}) + \hat\gamma F(P,Pe^{(r-\frac{\sigma^2}{2})\dt + \sigma B_{\dt}})]\}, \label{eq:V0}\\
F(P_0,P_1) &:= P_1\left(\frac{1}{\sqrt{P_1}} - \frac{1}{\sqrt{P_0}}\right)^+ + \left(\sqrt{P_1} - \sqrt{P_0}\right)^+.\label{eq:F}
\end{align}
In other words, the value function $V_0$ in \eqref{eq:V0} is comprised of the stopping value (withdrawing $2\sqrt{P}$), the first term, or the continuation value (the discounted value at the next block, which by our assumption is the value function at the next block's price, together with the proportion $\hat \gamma$ of the fees $F$ collected), the second term.
\begin{remark}
Because of the proportionality rule for the disbursement of fees, the value of an arbitrary number $L$ of liquidity tokens is equal to $L V_0(P)$ at the current market price of $P > 0$.
\end{remark}

Immediately, we are able to construct the risk-neutral price for a liquidity token in a CPMM. 
\begin{theorem}\label{thm:v2}
Fix the risk-free rate $r \geq 0$ and let the price process follow the geometric Brownian motion as in Assumption~\ref{ass:gbm}. 
Assume the current time ($t = 0$) is a block time.
A risk-neutral investor will deposit liquidity in the constant product market maker if, and only if,  
\begin{align}
\hat\gamma \geq \gamstr := 2\left[-1 + \frac{\Phi\left(\frac{(r+\frac{\sigma^2}{2})\sqrt{\dt}}{\sigma}\right) - e^{-r\dt}\Phi\left(\frac{(r-\frac{\sigma^2}{2})\sqrt{\dt}}{\sigma}\right)}{1 - e^{-\frac{1}{2}(r + \frac{\sigma^2}{4})\dt}}\right]^{-1},
\label{eq:gamstr}
\end{align}
where $\Phi$ is the CDF of the standard normal distribution.
Provided $\hat\gamma \geq\gamstr$, the value of the liquidity token at the current price $P_0 > 0$ is given by
\begin{equation}
V_0(P_0) = \frac{2\hat\gamma\sqrt{P_0}}{\gamstr}.
\end{equation}
\end{theorem}
\begin{proof}
Throughout this proof, let $Z \sim N(0,1)$ follow the standard normal distribution.
First, consider the expectation of the discounted fees. That is, given initial block price $P_0 > 0$, we want to find:
\begin{align*}
\bar F_0 &= e^{-r\dt}\E[F(P_0,P_{\dt})] = e^{-r\dt}\E[F(P_0,P_0e^{(r-\frac{\sigma^2}{2})\dt + \sigma Z \sqrt{\dt}})] \\
&= \sqrt{P_0}\E\left[e^{-\frac{\sigma^2}{2}\dt + \sigma Z \sqrt{\dt}}\left(e^{-\frac{1}{2}(r-\frac{\sigma^2}{2})\dt - \frac{\sigma}{2}Z\sqrt{\dt}} - 1\right)^+ + e^{-r\dt}\left(e^{\frac{1}{2}(r-\frac{\sigma^2}{2})\dt + \frac{\sigma}{2}Z\sqrt{\dt}}-1\right)^+\right] \\
&= \sqrt{P_0}\E\left[\begin{array}{l} \left(e^{-\frac{1}{2}(r+\frac{\sigma^2}{2})\dt+\frac{\sigma}{2}Z\sqrt{\dt}} - e^{-\frac{\sigma^2}{2}\dt + \sigma Z \sqrt{\dt}}\right)\ind{(r-\frac{\sigma^2}{2})\dt + \sigma Z\sqrt{\dt} < 0} \\ + \left(e^{-\frac{1}{2}(r+\frac{\sigma^2}{2})\dt+\frac{\sigma}{2}Z\sqrt{\dt}}-e^{-r\dt}\right)\ind{(r+\frac{\sigma^2}{2})\dt + \sigma Z \sqrt{\dt} > 0} \end{array}\right] \\
&= \sqrt{P_0}\E[e^{-\frac{1}{2}(r+\frac{\sigma^2}{2})\dt + \frac{\sigma}{2}Z\sqrt{\dt}}] - \sqrt{P}\E\left[e^{-\frac{\sigma^2}{2}\dt + \sigma Z \sqrt{\dt}}\ind{Z < -\frac{(r-\frac{\sigma^2}{2})\sqrt{\dt}}{\sigma}}\right] \\
    &\qquad - \sqrt{P}e^{-r\dt}\E\left[\ind{Z > -\frac{(r-\frac{\sigma^2}{2})\sqrt{\dt}}{\sigma}}\right] \\
&= \sqrt{P_0} e^{-\frac{1}{2}(r + \frac{\sigma^2}{4})\dt} - \sqrt{P}\E\left[\ind{Z+\sigma\sqrt{\dt} < -\frac{(r-\frac{\sigma^2}{2})\sqrt{\dt}}{\sigma}}\right] \\
    &\qquad - \sqrt{P}e^{-r\dt}\left[1-\Phi\left(-\frac{(r-\frac{\sigma^2}{2})\sqrt{\dt}}{\sigma}\right)\right] \\
&= \sqrt{P_0}\left[e^{-\frac{1}{2}(r+\frac{\sigma^2}{4})\dt} - 1 + \Phi\left(\frac{(r+\frac{\sigma^2}{2})\sqrt{\dt}}{\sigma}\right) - e^{-r\dt}\Phi\left(\frac{(r-\frac{\sigma^2}{2})\sqrt{\dt}}{\sigma}\right)\right] \\
&= \frac{2 (1 - e^{-\frac{1}{2}(r+\frac{\sigma^2}{4})\dt}) \sqrt{P_0}}{\gamstr}.
\end{align*}
It similarly follows that $\E[F(P_{i\dt},P_{(i+1)\dt}) \; | \; \fcal_{i\dt}] =  \frac{2 (1 - e^{-\frac{1}{2}(r+\frac{\sigma^2}{4})\dt}) \sqrt{P_{i\dt}}}{\gamstr} = \bar F_0\sqrt{P_{i\dt}/P_0}$ for any block $i$.

Consider the ansatz that an investor would choose to either never invest in the CPMM or, once invested, never withdraw her liquidity from the CPMM. Let $\tilde V(P)$ denote the value of the perpetual fees collection, i.e.,
\begin{align*}
\tilde V_0(P_0) &= \hat\gamma\sum_{i = 0}^\infty e^{-r i \dt} \E[F(P_{i\dt},P_{(i+1)\dt})] \\
&= \hat\gamma\sum_{i = 0}^\infty e^{-r i \dt} \E[\E[F(P_{i\dt},P_{(i+1)\dt}) \; | \; \fcal_{i\dt}]] \\
&= \hat\gamma\bar F_0 \sum_{i = 0}^\infty e^{-r i \dt} \E[\sqrt{P_{i\dt}/P_0}] \\
&= \hat\gamma\bar F_0 \sum_{i = 0}^\infty \E[e^{-\frac{1}{2}(r + \frac{\sigma^2}{2})i\dt + \frac{\sigma}{2}W_{i\dt}}] \\
&= \frac{2 \hat\gamma (1 - e^{-\frac{1}{2}(r+\frac{\sigma^2}{4})\dt}) \sqrt{P_0}}{\gamstr} \sum_{i = 0}^\infty e^{-\frac{1}{2}(r+\frac{\sigma^2}{4})i\dt} \\
&= \frac{2\hat\gamma\sqrt{P_0}}{\gamstr}.
\end{align*}
Therefore, by inspection, this strategy implies that the investor should deposit liquidity into the CPMM if, and only if, $\tilde V_0(P_0) \geq 2\sqrt{P_0}$, i.e., $\hat\gamma \geq\gamstr$.
Thus, we construct the ansatz value function $V_0(P_0) = \tilde V_0(P_0) \ind{\hat\gamma \geq\gamstr} + 2\sqrt{P_0} \ind{\hat\gamma <\gamstr}$ (noting that $\tilde V_0(P_0) = 2\sqrt{P_0}$ at $\hat\gamma =\gamstr$).

It remains to verify that this ansatz strategy is optimal. 
First, assume $\hat\gamma \geq\gamstr$ so that $V_0(P_0) = \tilde V_0(P_0)$.  By construction, we note that $\tilde V_0(P_0)$ coincides with its continuation value, that is, $\tilde V_0(P_0) $ $= e^{-r\dt}\E[\tilde V_0(P_0e^{(r-\frac{\sigma^2}{2})\dt + \sigma Z \sqrt{\dt}}) + \hat\gamma F(P_0,P_0e^{(r-\frac{\sigma^2}{2})\dt + \sigma Z \sqrt{\dt}})]$ for every $P_0 > 0$. 
Indeed, 
\begin{align*}
&e^{-r\dt}\E[\tilde V_0(P_0e^{(r-\frac{\sigma^2}{2})\dt + \sigma Z \sqrt{\dt}}) + \hat\gamma F(P_0,P_0e^{(r-\frac{\sigma^2}{2})\dt + \sigma Z \sqrt{\dt}})]\\
&=e^{-r\dt} \E\Bigg[\frac{2\hat\gamma\sqrt{P_0e^{(r-\frac{\sigma^2}{2})\dt + \sigma Z \sqrt{\dt}} }}{\gamstr}    \Bigg] + \hat\gamma \frac{2 (1 - e^{-\frac{1}{2}(r+\frac{\sigma^2}{4})\dt}) \sqrt{P_0}}{\gamstr} \\
&=  \frac{2\hat\gamma\sqrt{P_0}}{\gamstr} = \tilde V(P) .
\end{align*}
Furthermore, by $\hat\gamma \geq\gamstr$, this continuation value is always at least as large as the stopping (i.e. withdrawing) value $2\sqrt{P_0}$ validating the construction of $V_0(P_0)$.
Second, assume $\hat\gamma <\gamstr$ so that $V_0(P_0) = 2\sqrt{P_0}$. The continuation value under this value function is $2\sqrt{P_0}[e^{-\frac{1}{2}(r + \frac{\sigma^2}{4})\dt} + (1-e^{-\frac{1}{2}(r+\frac{\sigma^2}{4})\dt})\hat\gamma/\gamstr] < 2\sqrt{P_0}$ by assumption.
Therefore, this construction satisfies the dynamic programming principle and the proof is complete.
\end{proof}

We wish to conclude our discussion of the risk-neutral value of a liquidity token by considering its value when not at a block time. Recall from Assumption~\ref{ass:gbm} that the price process is observable continuously in time even though the blockchain only allows transactions at the block times. 
\begin{corollary}\label{cor:v2}
Consider the risk-free rate $r \geq 0$ and let the price process follow the geometric Brownian motion as in Assumption~\ref{ass:gbm}. 
Assume $t \in (0,\dt)$ is an inter-block time and let $\tau := \dt - t$ be the time until the next block.
Set the prior block time price to be $P_0 > 0$.
Provided $\hat\gamma \geq\gamstr$, the value of the liquidity token at the current time $t$ and price $P_t > 0$ is given by
\begin{align*}
V_t(P_t) = &\left(\frac{2}{\gamstr} + 1\right)\hat\gamma e^{-\frac{1}{2}(r+\frac{\sigma^2}{4})\tau}\sqrt{P_t} - \hat\gamma \frac{P_t}{\sqrt{P_0}} \left[1 - \Phi\left(\frac{\log(P_t/P_0)+(r+\frac{\sigma^2}{2})\tau}{\sigma\sqrt{\tau}}\right)\right]\\
    &\qquad - \hat\gamma e^{-r\tau}\sqrt{P_0}\Phi\left(\frac{\log(P_t/P_0)+(r-\frac{\sigma^2}{2})\tau}{\sigma\sqrt{\tau}}\right).
\end{align*}
\end{corollary}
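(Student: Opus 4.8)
The plan is to exploit the fact that at an inter-block time $t \in (0,\dt)$ the liquidity position cannot be withdrawn until the next block at time $\dt$; consequently there is no optimal-stopping decision to be made, and the value is simply the discounted conditional expectation of the next block's payoff. Concretely, I would write
\[
V_t(P_t) = e^{-r\tau}\E\left[V_0(P_{\dt}) + \hat\gamma F(P_0,P_{\dt}) \;\middle|\; \fcal_t\right],
\]
where $P_{\dt} = P_t e^{(r-\frac{\sigma^2}{2})\tau + \sigma(W_{\dt}-W_t)}$ and the fee accrued at the next block is measured against the \emph{previous} block price $P_0$ (this is the origin of the $P_0$-dependence in the stated formula). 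Under the standing hypothesis $\hat\gamma \geq \gamstr$, Theorem~\ref{thm:v2} supplies the closed form $V_0(P) = 2\hat\gamma\sqrt{P}/\gamstr$, so the entire computation reduces to evaluating two Gaussian expectations.

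For the continuation (token) term I would use the lognormal moment $\E[\sqrt{P_{\dt}}\mid\fcal_t] = \sqrt{P_t}\,e^{\frac12(r-\frac{\sigma^2}{4})\tau}$, which after discounting yields $e^{-r\tau}\E[V_0(P_{\dt})\mid\fcal_t] = (2\hat\gamma/\gamstr)\,e^{-\frac12(r+\frac{\sigma^2}{4})\tau}\sqrt{P_t}$, precisely the $2/\gamstr$ portion of the leading coefficient. For the fee term I would first rewrite the payoff \eqref{eq:F} in the form
\[
F(P_0,P_{\dt}) = \sqrt{P_{\dt}} - \frac{P_{\dt}}{\sqrt{P_0}}\,\ind{P_{\dt} < P_0} - \sqrt{P_0}\,\ind{P_{\dt} > P_0},
\]
obtained by noting that exactly one of the two positive-parts in $F$ is active according to the sign of $P_{\dt} - P_0$. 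Taking $e^{-r\tau}\E[\,\cdot\mid\fcal_t]$ term-by-term: the $\sqrt{P_{\dt}}$ piece contributes the remaining $e^{-\frac12(r+\frac{\sigma^2}{4})\tau}\sqrt{P_t}$ (the ``$+1$'' in the coefficient); the $P_{\dt}\ind{P_{\dt}<P_0}$ piece is a truncated-lognormal integral of the standard type $\E[e^{\sigma\sqrt{\tau}Z}\ind{Z<d}] = e^{\sigma^2\tau/2}\Phi(d-\sigma\sqrt{\tau})$, producing the $\frac{P_t}{\sqrt{P_0}}[1-\Phi(\cdots)]$ term; and the bare indicator $\ind{P_{\dt}>P_0}$ gives a plain Gaussian tail producing the $e^{-r\tau}\sqrt{P_0}\Phi(\cdots)$ term. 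Summing the continuation and fee contributions reproduces the stated expression.

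The computation is essentially Black--Scholes bookkeeping, so the care needed is twofold: (i) being precise that the next-block fee is computed relative to $P_0$ rather than the current price $P_t$, which introduces the two distinct arguments $\frac{\log(P_t/P_0)+(r\pm\frac{\sigma^2}{2})\tau}{\sigma\sqrt{\tau}}$ in the normal CDFs; and (ii) tracking the drift and variance shifts when moving the factor $e^{\sigma\sqrt{\tau}Z}$ inside the indicator, so that the argument of $\Phi$ shifts by exactly $\sigma\sqrt{\tau}$ between the two fee terms. I expect step~(i), correctly setting up the conditional expectation with the right reference price for the fees, to be the main conceptual point, while the Gaussian integrals of step~(ii) are routine.
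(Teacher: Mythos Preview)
Your proposal is correct and follows essentially the same route as the paper's proof: write $V_t(P_t)=e^{-r\tau}\E[V_0(P_{\dt})+\hat\gamma F(P_0,P_{\dt})\mid\fcal_t]$, evaluate the $V_0$ piece via the lognormal moment of $\sqrt{P_{\dt}}$, and for the fee piece use exactly the decomposition $F(P_0,P_{\dt})=\sqrt{P_{\dt}}-\tfrac{P_{\dt}}{\sqrt{P_0}}\ind{P_{\dt}<P_0}-\sqrt{P_0}\ind{P_{\dt}>P_0}$ followed by the standard Girsanov-type shift in the truncated Gaussian integral. The two conceptual points you flag (fee reference price $P_0$, and the $\sigma\sqrt{\tau}$ shift between the two CDF arguments) are precisely where the paper's computation spends its effort as well.
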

\begin{proof}
As the blockchain only permits transactions at block times, by construction the value of the liquidity token between block times is given by the expected value at the next block, i.e.,
$V_t(P_t) = e^{-r\tau}\E[V_0(P_{\dt}) + \hat\gamma F(P_0,P_{\dt}) \; | \; \fcal_t]$.
As in the proof of Theorem~\ref{thm:v2}, let $Z \sim N(0,1)$ follow the standard normal distribution.
Consider, first, the discounted value of the liquidity token at the next block:
\begin{align*}
e^{-r\tau}\E[V_0(P_{\dt}) \; | \; \fcal_t] &= \frac{2\hat\gamma}{\gamstr}\sqrt{P_t}e^{-r\tau}\E[e^{\frac{1}{2}(r-\frac{\sigma^2}{2})\tau + \frac{\sigma}{2} Z\sqrt{\tau}}] \\
&= \frac{2\hat\gamma}{\gamstr}e^{-\frac{1}{2}(r+\frac{\sigma^2}{4})\tau}\sqrt{P_t}.
\end{align*}
Consider, now, the discounted value of the fees that would be earned in this current block:
\begin{align*}
&e^{-r\tau}\hat\gamma\E[F(P_0,P_{\dt}) \; | \; \fcal_t]\\
&= e^{-r\tau}\hat\gamma\E\left[P_t e^{(r-\frac{\sigma^2}{2})\tau + \sigma Z \sqrt{\tau}} \left(P_t^{-1/2}e^{-\frac{1}{2}(r-\frac{\sigma^2}{2})\tau - \frac{\sigma}{2} Z \sqrt{\tau}} - P_0^{-1/2}\right)^+ \; | \; \fcal_t\right]\\
&\qquad+ e^{-r\tau}\hat\gamma\E\left[\left(P_t^{1/2} e^{\frac{1}{2}(r-\frac{\sigma^2}{2})\tau + \frac{\sigma}{2} Z \sqrt{\tau}} - P_0^{1/2}\right)^+ \; | \; \fcal_t\right] \\
&= e^{-r\tau}\hat\gamma\E\left[\sqrt{P_{\dt}} - \frac{P_t e^{(r+\frac{\sigma^2}{2})\tau + \sigma Z \sqrt{\tau}}}{\sqrt{P_0}}\ind{(r+\frac{\sigma^2}{2})\tau + \sigma Z \sqrt{\tau} < -\log(P_t/P_0)}\; | \; \fcal_t\right]\\
&\qquad- e^{-r\tau}\hat\gamma\E\left[ \sqrt{P_0} \ind{(r+\frac{\sigma^2}{2})\tau + \sigma Z \sqrt{\tau} > -\log(P_t/P_0)} \; | \; \fcal_t\right] \\
&= \hat\gamma e^{-\frac{1}{2}(r+\frac{\sigma^2}{4})\tau}\sqrt{P_t} - \hat\gamma\frac{P_t}{\sqrt{P_0}}\E\left[e^{-\frac{\sigma^2}{2}\tau + \sigma Z \sqrt{\tau}} \ind{Z < -\frac{\log(P_t/P_0) + (r - \frac{\sigma^2}{2})\tau}{\sigma\sqrt{\tau}}}\right]\\
    &\qquad - \hat\gamma e^{-r\tau} \sqrt{P_0} \P\left(Z > -\frac{\log(P_t/P_0) + (r - \frac{\sigma^2}{2})\tau}{\sigma\sqrt{\tau}}\right) \\
&= \hat\gamma e^{-\frac{1}{2}(r+\frac{\sigma^2}{4})\tau}\sqrt{P_t} - \hat\gamma\frac{P_t}{\sqrt{P_0}}\P\left(Z + \sigma\sqrt{\tau} < -\frac{\log(P_t/P_0) + (r - \frac{\sigma^2}{2})\tau}{\sigma\sqrt{\tau}}\right)\\
    &\qquad - \hat\gamma e^{-r\tau} \sqrt{P_0} \Phi\left(\frac{\log(P_t/P_0) + (r - \frac{\sigma^2}{2})\tau}{\sigma\sqrt{\tau}}\right) \\
&= \hat\gamma e^{-\frac{1}{2}(r+\frac{\sigma^2}{4})\tau}\sqrt{P_t} - \hat\gamma\frac{P_t}{\sqrt{P_0}}\Phi\left(-\frac{\log(P_t/P_0) + (r + \frac{\sigma^2}{2})\tau}{\sigma\sqrt{\tau}}\right) \\
&\qquad- \hat\gamma e^{-r\tau} \sqrt{P_0} \Phi\left(\frac{\log(P_t/P_0) + (r - \frac{\sigma^2}{2})\tau}{\sigma\sqrt{\tau}}\right).
\end{align*}
Combining these terms together immediately provides the desired result.
\end{proof}

\subsection{Greeks}\label{sec:greeks}
As we can describe the value of the liquidity position in the CPMM via Theorem~\ref{thm:v2} and Corollary~\ref{cor:v2}, it is valuable also to understand how to hedge the risks of this position. For this purpose we will consider various Greeks for the liquidity token. Herein we will focus specifically on the Greeks at block times though, utilizing the forms of Corollary~\ref{cor:v2}, the Greeks can also be computed between blocks.
\begin{assumption}\label{ass:gamma}
Following Theorem~\ref{thm:v2}, throughout this section, we will assume that $\gamstr \leq \hat\gamma$.
\end{assumption}

\paragraph{Delta:} First, consider the sensitivity of the value of the liquidity token to the underlying price, i.e., the delta of the liquidity token.  By construction in Theorem~\ref{thm:v2}, this sensitivity is driven entirely by the square root of the current price, i.e., 
\begin{align}
\frac{\partial}{\partial P} V_0(P) = \frac{\hat\gamma}{\gamstr\sqrt{P}} = \frac{V_0(P)}{2P}
\label{eq:Delta}
\end{align}
for any price $P > 0$.
Notably, by this construction, it immediately follows that the delta $\frac{\partial}{\partial P} V_0(P) > 0$ is strictly positive.

\paragraph{Gamma:} Second, consider the sensitivity of the delta of the liquidity token to the underlying price, i.e., the gamma of the liquidity token. Much like the delta above, this sensitivity follows simply from Theorem~\ref{thm:v2}: 
\begin{align}
\frac{\partial^2}{\partial P^2} V_0(P) = -\frac{\hat\gamma}{2\gamstr P^{3/2}} = -\frac{V_0(P)}{4P^2}
\label{eq:Gamma}
\end{align}
for any price $P > 0$.
Notably, by this construction, it immediately follows that the gamma $\frac{\partial^2}{\partial P^2} V_0(P) < 0$ is strictly negative.

\paragraph{Vega:} Finally, consider the sensitivity of the value of the liquidity token to the realized volatility, i.e., the vega of the liquidity token. Due to the dependence of $\gamstr$ on the volatility $\sigma$, the vega has a more complex dependency: 
\begin{align}
\!\!\!\!\!\!\!\!\! \frac{\partial}{\partial \sigma} V_0(P) = \frac{\hat\gamma \sqrt{P} e^{-\frac{1}{2}(r+\frac{\sigma^2}{4})\dt}}{1 - e^{-\frac{1}{2}(r+\frac{\sigma^2}{4})\dt}} \left[\sqrt{\frac{\dt}{2\pi}}e^{-\frac{r^2\dt}{2\sigma^2}} - \frac{\sigma\dt}{4}\frac{\Phi(\frac{(r+\frac{\sigma^2}{2})\sqrt{\dt}}{\sigma})-e^{-r\dt}\Phi(\frac{(r-\frac{\sigma^2}{2})\sqrt{\dt}}{\sigma})}{1 - e^{-\frac{1}{2}(r+\frac{\sigma^2}{4})\dt}}\right]
\label{eq:Vega}
\end{align}
for any price $P>0$.
In contrast to the delta and gamma of this position, the vega does not have a constant sign. In particular, $\frac{\partial}{\partial\sigma} V_0(P) > 0$ for $\sigma > 0$ sufficiently small and $\frac{\partial}{\partial\sigma} V_0(P) < 0$ for $\sigma > 0$ sufficiently large. Therefore, in contrast to the typical derivatives contracts, e.g., a European call option, the liquidity token has a complex dependency on volatility rather than simply being a long position in volatility.
We demonstrate an example of this complex dependency of vega on the volatility in Figure~\ref{fig:vega}. 
Intuitively, when $\sigma > 0$ is small, a tiny increase in volatility will lead to more trading and, therefore, fees collected by the LPs, i.e., a positive vega. On the other hand, when $\sigma>0$ is already very large, a further increase in volatility will, also, increase the probability of the event that the price collapses to zero; this leads to a drop in the price of the token $P$ which results in a negative vega.
\begin{figure}[h!]
\centering
\begin{subfigure}[t]{0.45\textwidth}
\centering
\includegraphics[width=\textwidth]{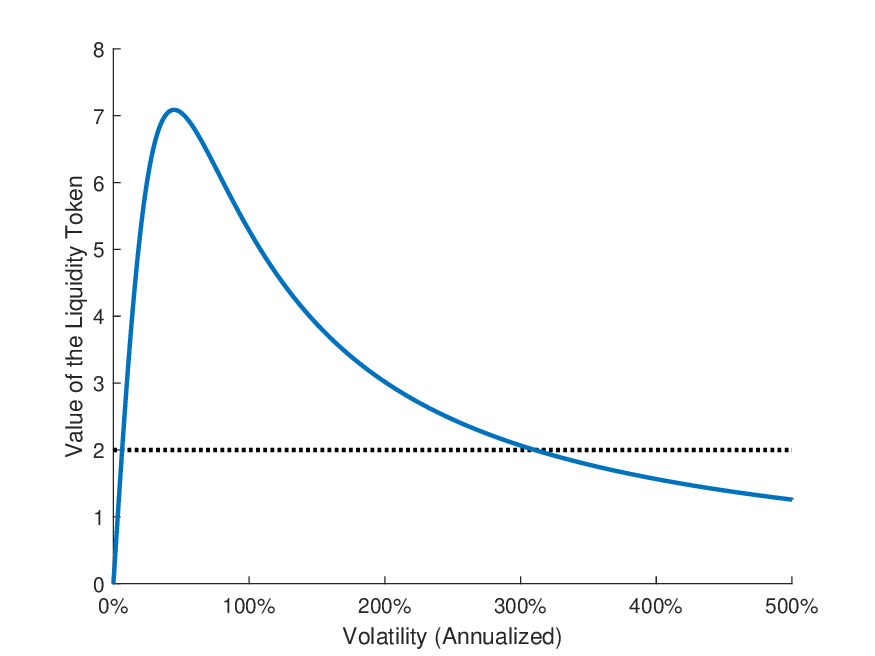}
\caption{Valuation $V_0(P_0)$ of a single liquidity token as a function of (annualized) volatility.}
\label{fig:valuation}
\end{subfigure}
~
\begin{subfigure}[t]{0.45\textwidth}
\includegraphics[width=\textwidth]{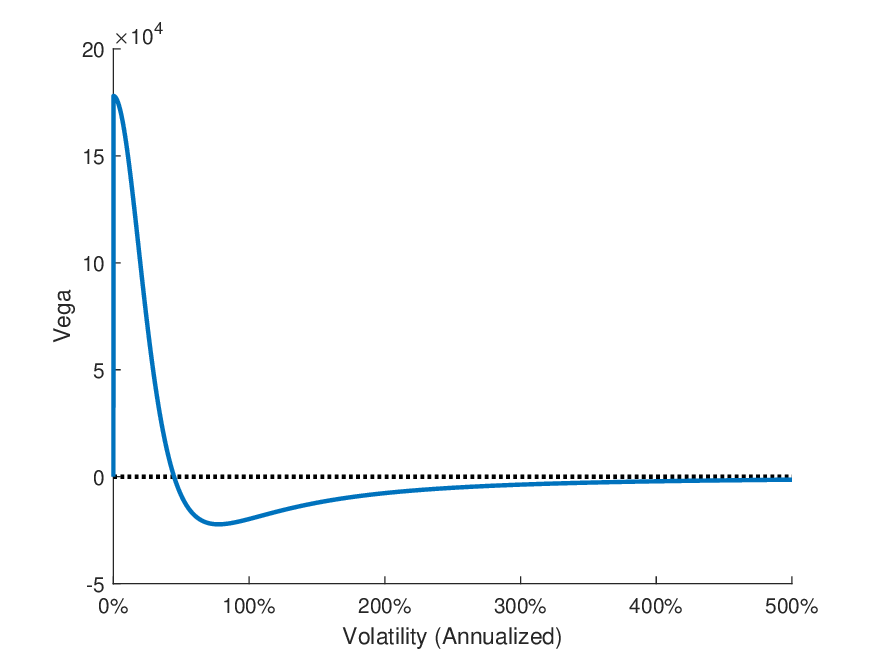}
\caption{Vega $\frac{\partial}{\partial \sigma} V_0(P_0)$ of a single liquidity token as a function of (annualized) volatility.}
\label{fig:vega}
\end{subfigure}
\caption{Valuation and vega with $P_0 = 1$, $\gamma = 5bps$ (i.e., $\hat\gamma = 5.0025bps$), $r = 5\%$ (annualized) and $\dt = 2$ seconds.}
\label{fig:V-vega}
\end{figure}

\section{The Implied Volatility and Estimating the Arbitrage-Free Price of a Liquidity Token}\label{sec:repricing}

Recall from Section~\ref{sec:motivating-example}, the current prevailing market price for a liquidity token within a CPMM is $2\sqrt{P}$. In Section~\ref{sec:cpmm-value}, we found the risk-neutral valuation of these tokens. Within Section~\ref{sec:vol}, we investigate the market implied volatility so that the risk-neutral price coincides with $2\sqrt{P}$. 
However, following Example~\ref{ex:motivating}, Uniswap data readily provides for arbitrage opportunities. Therefore, the fair price of the liquidity token is \emph{not} the market price provided by the CPMM (i.e., is not $2\sqrt{P}$). Within Section~\ref{sec:repricing-vol} we wish to use the above theory on pricing the CPMM in order to calibrate the arbitrage-free pricing of the liquidity token during the period of study. In doing so, we provide a procedure to estimate a new volatility which is ``implied'' by observed data and which can then be used to re-price the liquidity token. 
We conclude by revisiting Example~\ref{ex:motivating} to demonstrate the efficacy of our calibrated volatility for the CPMM liquidity token by investigating the degree to which arbitrage opportunities can be eliminated during the period of study.

\subsection{Implied Volatility}\label{sec:vol}

Assume that the current time ($t = 0$) is a block time. Recall from Theorem~\ref{thm:v2} that the value of a liquidity token is given by $V_0(P_0)$. Further, as discussed in Section~\ref{sec:motivating-example}, the current market price of a liquidity token is $2\sqrt{P_0}$. Therefore, in order to determine the implied volatility, we seek to find $\sigma > 0$ so that $V_0(P_0) = 2\sqrt{P_0}$. In particular, we are seeking the volatility so that the risk-neutral investor is indifferent between investing in the liquidity token or holding the original cash position. Notably, as expressed in the proof of Theorem~\ref{thm:v2}, this differs from the condition that the risk-neutral investor would choose to withdraw the liquidity from the CPMM immediately after depositing.  

\begin{definition}\label{defn:impliedvol}
A volatility $\sigma > 0$ is called an \textbf{implied volatility} if $V_0(P_0) = 2\sqrt{P_0}$ and, with a slight abuse of notation to make the dependence on volatility explicit, $\gamstr(\sigma) \leq \hat\gamma$.
\end{definition}

In the following lemma, we study the implied volatility under any market scenario. Notably, we find three possible situations: (i) no implied volatility exists; (ii) a unique implied volatility exists; or (iii) exactly two implied volatilities exist. Note that this result is consistent with our prior discussion of vega (see, e.g., Figure~\ref{fig:vega}) in which vega is positive for small volatilities and negative for large volatilities. 

\begin{lemma}\label{lemma:v2-sigma}
The implied volatility $\sigstr > 0$ is any volatility such that $\gamstr(\sigstr) = \hat\gamma$.
\begin{itemize}
\item If $r = 0$ then there exists a \emph{unique} implied volatility $\sigstr > 0$.
\item If $r > 0$ then:
    \begin{itemize}
    \item If $\dt > \overline\dt := \sqrt{\frac{8}{\pi}}\frac{\hat\gamma}{(2+\hat\gamma)r}e^{-1/2}$ then \emph{no} implied volatility exists.
    \item If $\dt \leq \overline\dt$ then define $\bar\sigma := r\sqrt{\frac{\dt}{-W\left(-\frac{\pi}{2}\left[\frac{(2+\hat\gamma)r\dt}{2\hat\gamma}\right]^2\right)}}$ and:
        \begin{itemize}
        \item there does \emph{not} exist an implied volatility if $\hat\gamma <\gamstr(\bar\sigma)$;
        \item there exists a \emph{unique} implied volatility $\sigstr = \bar\sigma$ if $\hat\gamma =\gamstr(\bar\sigma)$; and
        \item there exists exactly \emph{two distinct} implied volatilities $\sigstr_1 < \bar\sigma < \sigstr_2$ if $\hat\gamma >\gamstr(\bar\sigma)$.
        \end{itemize}
    \end{itemize}
\end{itemize}
\end{lemma}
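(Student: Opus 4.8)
\subsection*{Proof proposal}

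The plan is to reduce the entire question to counting the zeros of a single scalar function and then to exploit a surprising simplification of its derivative. Write $c := \frac{2+\hat\gamma}{\hat\gamma} > 1$, set $D(\sigma) := 1 - e^{-\frac12(r+\frac{\sigma^2}{4})\dt}$ and $N(\sigma) := \Phi(d_+) - e^{-r\dt}\Phi(d_-)$ where $d_\pm := \frac{(r\pm\frac{\sigma^2}{2})\sqrt{\dt}}{\sigma}$. The expression for $\gamstr$ in Theorem~\ref{thm:v2} rearranges to $\gamstr(\sigma) = \frac{2D(\sigma)}{N(\sigma)-D(\sigma)}$, and since $D(\sigma)>0$ for all $\sigma>0$, the defining relation $\gamstr(\sigma)=\hat\gamma$ is equivalent to $\psi(\sigma)=0$ for $\psi(\sigma):=N(\sigma)-c\,D(\sigma)$. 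Thus I must count the zeros of $\psi$ on $(0,\infty)$. I first record the endpoints: as $\sigma\to\infty$ one has $d_+\to+\infty$, $d_-\to-\infty$, $D\to1$, hence $\psi\to 1-c=-\tfrac{2}{\hat\gamma}<0$; as $\sigma\to0^+$ with $r>0$ a direct evaluation gives $\psi(0^+)=(1-a)(a-\tfrac{2}{\hat\gamma})$ with $a:=e^{-\frac12 r\dt}\in(0,1)$, which is $\le 0$ whenever $\hat\gamma\le 2$ (the realistic fee regime), while for $r=0$ a leading-order expansion yields $\psi(0^+)=0$ with $\psi>0$ immediately to the right of the origin.

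The computational heart is the derivative of $\psi$. From $d_+^2-d_-^2=2r\dt$ one gets the identity $\phi(d_-)=e^{r\dt}\phi(d_+)$, which makes the two normal-density contributions to $N'$ cancel and leaves the remarkably clean $N'(\sigma)=\sqrt{\dt}\,\phi(d_+)$. Combining with $D'$ gives $\psi'(\sigma)=e^{-\frac{r\dt}{2}-\frac{\sigma^2\dt}{8}}\,q(\sigma)$ where $q(\sigma):=\sqrt{\tfrac{\dt}{2\pi}}\,e^{-\frac{r^2\dt}{2\sigma^2}}-\tfrac{c\dt}{4}\sigma$, so the critical points of $\psi$ are precisely the zeros of the elementary function $q$. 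For $r=0$, $q$ is affine and strictly decreasing, so it has a single zero; hence $\psi$ rises then falls, and together with $\psi(0^+)=0$ (with $\psi>0$ just past the origin) and $\psi\to -\tfrac2{\hat\gamma}<0$ it crosses zero exactly once, giving the unique implied volatility. For $r>0$ I study $q=0$ through $m(\sigma):=\log\!\big(\sqrt{\tfrac{\dt}{2\pi}}e^{-\frac{r^2\dt}{2\sigma^2}}\big)-\log\!\big(\tfrac{c\dt}{4}\sigma\big)$: since $m'(\sigma)=\tfrac1\sigma\big(\tfrac{r^2\dt}{\sigma^2}-1\big)$, the function $m$ is strictly unimodal with maximum at $\sigma=r\sqrt\dt$ and $m(0^+)=m(\infty)=-\infty$, so $q$ has $0$, $1$ or $2$ zeros. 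Evaluating $m(r\sqrt\dt)$ and simplifying shows $m(r\sqrt\dt)\lessgtr0\iff\dt\gtrless\overline\dt$, which pins the zero-count of $q$ to the threshold $\overline\dt$.

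It then remains to translate the zero-count of $q$ into the shape of $\psi$ and read off the roots, using the endpoint signs. For $\dt>\overline\dt$ we have $q<0$ throughout, so $\psi$ strictly decreases from $\psi(0^+)\le0$ and never vanishes: no implied volatility (the degenerate case $\dt=\overline\dt$ collapses to the same conclusion). For $\dt<\overline\dt$, let $\sigma_1<\sigma_2$ be the zeros of $q$ and set $\bar\sigma:=\sigma_2$; then $q$ is negative, positive, negative on the three resulting intervals, so $\psi$ has a local minimum at $\sigma_1$ and a local maximum at $\bar\sigma$. Because $\psi(0^+)\le0$ forces $\psi(\sigma_1)<0$, the graph of $\psi$ lies below zero except possibly near $\bar\sigma$, so the number of roots is $0$, $1$, or $2$ exactly as $\psi(\bar\sigma)<0$, $=0$, or $>0$; and since $\psi(\bar\sigma)\lessgtr0\iff\gamstr(\bar\sigma)\gtrless\hat\gamma$, this is precisely the stated trichotomy $\hat\gamma<\gamstr(\bar\sigma)$, $\hat\gamma=\gamstr(\bar\sigma)$, $\hat\gamma>\gamstr(\bar\sigma)$ with $\sigstr_1<\bar\sigma<\sigstr_2$ in the last case. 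To obtain the closed form for $\bar\sigma$, I substitute $x:=r^2\dt/\sigma^2$ into $q(\bar\sigma)=0$; after clearing and squaring this becomes $x e^{-x}=\tfrac{\pi c^2 r^2\dt^2}{8}$, equivalently $x=-W\!\big(-\tfrac{\pi}{2}[\tfrac{(2+\hat\gamma)r\dt}{2\hat\gamma}]^2\big)$, and choosing the principal branch (the larger root $\sigma_2$, corresponding to $x\in(0,1)$) gives exactly $\bar\sigma=r\sqrt{\dt/(-W(\cdots))}$.

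The main obstacle I anticipate is the passage from the critical points of $\psi$ to its actual roots: the derivative only locates the extrema, and converting that into exact counts hinges on controlling the boundary value $\psi(0^+)$ — this is where the admissibility $\hat\gamma\le2$ quietly enters to guarantee $\psi(0^+)\le0$ and hence $\psi(\sigma_1)<0$ — and on correctly identifying the local maximum $\bar\sigma$ with the principal Lambert-$W$ branch rather than the $W_{-1}$ branch, which instead produces the spurious local minimum $\sigma_1$.
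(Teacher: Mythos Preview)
Your approach is essentially the paper's: your auxiliary function $\psi$ is exactly $-G/\hat\gamma$ for the paper's
\[G(\sigma) := (2+\hat\gamma)\bigl[1 - e^{-\frac{1}{2}(r + \frac{\sigma^2}{4})\dt}\bigr] - \hat\gamma\bigl[\Phi(d_+) - e^{-r\dt}\Phi(d_-)\bigr],\]
and both arguments proceed by computing the derivative (via the same identity $\phi(d_-)=e^{r\dt}\phi(d_+)$), locating its zeros, and reading off the root count from the boundary values. Your treatment is in fact more careful on two points the paper glosses over: (i) when $r>0$ the derivative $G'$ has \emph{two} zeros $\sigma_1<\sigma_2$ rather than the single one the paper asserts, and you correctly identify $\bar\sigma=\sigma_2$ as the relevant extremum and tie it to the principal Lambert-$W$ branch; (ii) you make explicit that the boundary sign $\psi(0^+)\le 0$ (equivalently $G(0^+)>0$) requires $\hat\gamma<2e^{r\dt/2}$, a mild fee restriction the paper uses without comment.
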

\begin{proof}
First, following Definition~\ref{defn:impliedvol} and Theorem~\ref{thm:v2}, assume $\hat\gamma \geq\gamstr(\sigma)$, then $V_0(P_0) = \frac{2\hat\gamma\sqrt{P_0}}{\gamstr(\sigma)}$.  Immediately, since the initial cost of this investment is $2\sqrt{P_0}$, the implied volatility must be such that $\gamstr(\sigma) = \hat\gamma$.

Define $G: \R_{++} \to \R$ such that
\[G(\sigma) := (2+\hat\gamma)\left[1 - e^{-\frac{1}{2}(r + \frac{\sigma^2}{4})\dt}\right] - \hat\gamma\left[\Phi\left(\frac{(r+\frac{\sigma^2}{2})\sqrt{\dt}}{\sigma}\right) - e^{-r\dt}\Phi\left(\frac{(r-\frac{\sigma^2}{2})\sqrt{\dt}}{\sigma}\right)\right]\]
for any $\sigma > 0$. By construction, $\gamstr(\sigma) \geq \hat\gamma$ ($\leq$) if, and only if, $G(\sigma) \leq 0$ (resp.\ $\geq$). Therefore, we can determine the existence properties of the implied volatility by studying the roots of $G$.
As it will be needed later, we note that $G$ is differentiable with derivative
\[G'(\sigma) = e^{-\frac{1}{2}(r+\frac{\sigma^2}{4})\dt}\left[(2+\hat\gamma)\frac{\sigma\dt}{4} - \hat\gamma\sqrt{\frac{\dt}{2\pi}}e^{-\frac{r^2\dt}{2\sigma^2}}\right].\]

First, consider the case with zero risk-free rate $r = 0$. We note that $\lim_{\sigma \searrow 0} G(\sigma) = 0$ (with $\lim_{\sigma \searrow 0} G'(\sigma) = -\hat\gamma\sqrt{\frac{\dt}{2\pi}}$) and $\lim_{\sigma \nearrow \infty} G(\sigma) = 2$. 
Therefore, there exists at least one implied volatility $\sigstr > 0$.
Additionally, $\bar\sigma := \frac{\hat\gamma}{2+\hat\gamma}\sqrt{\frac{8}{\pi\dt}}$ is the unique positive root of $G'$. Therefore, there cannot exist a positive implied volatility below $\bar\sigma$ and there must exist a unique implied volatility $\sigstr > \bar\sigma$.

Assume, now, a strictly positive risk-free rate $r > 0$.
First, we note that $\lim_{\sigma \searrow 0} G(\sigma) = 2(1 - e^{-\frac{1}{2}r\dt}) - \hat\gamma(e^{-\frac{1}{2}r\dt} - e^{-r\dt}) > 0$ and $\lim_{\sigma \nearrow \infty} G(\sigma) = 2$.  Therefore, there exists an implied volatility $\sigstr > 0$ if, and only if, $\inf_{\sigma > 0} G(\sigma) \leq 0$.
To investigate this infimum, we first consider the sign of $G'$. In particular, $G'(\sigma) \leq 0$ if, and only if, $(2+\hat\gamma)\frac{\sigma\dt}{4} - \hat\gamma\sqrt{\frac{\dt}{2\pi}}e^{-\frac{r^2\dt}{2\sigma^2}} \leq 0$ or, equivalently via rearranging terms, $\sigma e^{\frac{r^2\dt}{2\sigma^2}} \leq \frac{4\hat\gamma}{\sqrt{2\pi\dt}(2+\hat\gamma)}$. Noting that $\inf_{\sigma > 0} \sigma e^{\frac{r^2\dt}{2\sigma^2}} = r\sqrt{\dt}e^{1/2}$, $G'(\sigma) > 0$ for every $\sigma > 0$ if, and only if, $r\sqrt{\dt}e^{1/2} > \frac{4\hat\gamma}{\sqrt{2\pi\dt}(2+\hat\gamma)}$, i.e., $\dt > \overline\dt$. Assuming $\dt \leq \overline\dt$, we can find a unique root of $G'$ using the Lambert $W$ function at $\bar\sigma$ as provided in the statement of the lemma. We note that $\dt \leq \overline\dt$ guarantees $-\frac{\pi}{2}\left[\frac{(2+\hat\gamma)r\dt}{2\hat\gamma}\right]^2 \in [-e^{-1},0)$, i.e., $\bar\sigma$ is well-defined and positive.
Assume $\dt \leq \overline\dt$, then since this root $\bar\sigma$ is unique, it must follow that $\inf_{\sigma > 0} G(\sigma) \in \{G(\bar\sigma) , \lim_{\sigma \searrow 0} G(\sigma)\}$.
From this, and recalling the relation between the sign of $G$ and the risk-neutral valuation, the result trivially follows.
\end{proof}

Due to the construction of $\gamstr(\sigma)$, the (set of) implied volatility is independent of the current price $P_0$ and, as noted throughout this work, the total level of liquidity in the CPMM. 
Because all other parameters are fixed by the blockchain ($\dt$), AMM smart contract ($\gamma$), or central bank ($r$), this constancy allows us to consider market structures that admit arbitrage opportunities (i.e., $V_0(P) > 2\sqrt{P}$ for every price $P > 0$) based solely on the realized volatility.
\begin{corollary}\label{cor:arb}
There exist arbitrage opportunities, i.e., $V_0(P) > 2\sqrt{P}$ for every price $P > 0$, if and only if either
\begin{itemize}
\item $r = 0$ and $\sigma < \sigstr$; or
\item $r > 0$, $\dt \leq \overline\dt$, $\hat\gamma > \gamstr(\bar\sigma)$, and $\sigma \in (\sigstr_1,\sigstr_2)$.
\end{itemize}
\end{corollary}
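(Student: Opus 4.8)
The plan is to recognize that this corollary is essentially a direct reading of the sign of the auxiliary function $G$ introduced in the proof of Lemma~\ref{lemma:v2-sigma}, combined with the valuation formula of Theorem~\ref{thm:v2}. First I would observe that, by Theorem~\ref{thm:v2}, whenever $\hat\gamma \geq \gamstr(\sigma)$ the value of the token is $V_0(P) = \frac{2\hat\gamma\sqrt{P}}{\gamstr(\sigma)}$, so that $V_0(P) > 2\sqrt{P}$ holds if and only if $\hat\gamma > \gamstr(\sigma)$; crucially this inequality does not involve $P$, which is precisely why arbitrage either holds for every price or for none (justifying the ``for every price $P > 0$'' in the statement). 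When $\hat\gamma < \gamstr(\sigma)$ we have $V_0(P) = 2\sqrt{P}$ and no arbitrage, so in every case arbitrage is equivalent to the strict inequality $\hat\gamma > \gamstr(\sigma)$.

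Next I would translate this into a statement about $G$. Using the sign relation between $\gamstr(\sigma)$ and $\hat\gamma$ encoded by $G$ in the proof of Lemma~\ref{lemma:v2-sigma} --- namely that $\hat\gamma > \gamstr(\sigma)$ holds if and only if $G(\sigma) < 0$ --- the arbitrage condition becomes simply $G(\sigma) < 0$. It then remains only to read off the set $\{\sigma > 0 : G(\sigma) < 0\}$ from the sign analysis already carried out in the proof of Lemma~\ref{lemma:v2-sigma}.

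I would then dispatch the two cases. For $r = 0$, the proof of Lemma~\ref{lemma:v2-sigma} shows $G(0^+) = 0$ with $G$ strictly decreasing at the origin (since $\lim_{\sigma \searrow 0} G'(\sigma) = -\hat\gamma\sqrt{\frac{\dt}{2\pi}} < 0$), attaining its minimum at the unique critical point $\bar\sigma$ and increasing to the limit $2$; hence $G$ has a single positive root $\sigstr$ and $G(\sigma) < 0$ precisely on $(0,\sigstr)$, giving arbitrage iff $\sigma < \sigstr$. For $r > 0$ with $\dt > \overline\dt$, the same lemma gives $G'(\sigma) > 0$ throughout together with $G(0^+) > 0$, so $G > 0$ everywhere and no arbitrage can occur. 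For $r > 0$ with $\dt \leq \overline\dt$, $G$ attains its infimum at $\bar\sigma$, and since $G(\bar\sigma) < 0$ is equivalent to $\hat\gamma > \gamstr(\bar\sigma)$, arbitrage is possible only when $\hat\gamma > \gamstr(\bar\sigma)$; in that regime Lemma~\ref{lemma:v2-sigma} supplies exactly the two roots $\sigstr_1 < \bar\sigma < \sigstr_2$ of $G$, between which $G$ is negative, so arbitrage holds iff $\sigma \in (\sigstr_1,\sigstr_2)$. Collecting these cases yields the stated characterization.

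The only real subtlety --- rather than a genuine obstacle --- is the careful handling of strict versus non-strict inequalities: arbitrage requires the strict inequality $V_0(P) > 2\sqrt{P}$, hence strict $G(\sigma) < 0$, and it is this strictness that excludes the tangency configuration $\hat\gamma = \gamstr(\bar\sigma)$ (where $G$ only touches zero at $\bar\sigma$) as well as the endpoints $\sigstr$ and $\sigstr_1,\sigstr_2$, which are exactly the implied volatilities at which $G$ vanishes. Everything else is bookkeeping transferred directly from the proof of Lemma~\ref{lemma:v2-sigma}.
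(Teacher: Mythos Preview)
Your proposal is correct and is exactly the argument the paper has in mind: the paper's own proof consists of the single sentence ``This result follows directly from the proof of Lemma~\ref{lemma:v2-sigma},'' and you have simply unpacked that sentence by reducing $V_0(P) > 2\sqrt{P}$ to $\hat\gamma > \gamstr(\sigma)$ via Theorem~\ref{thm:v2} and then reading off the sign of $G$ from the analysis already performed in that lemma. Your handling of the strict versus non-strict inequalities (excluding the endpoints and the tangency case $\hat\gamma = \gamstr(\bar\sigma)$) is the only point that needed care, and you treat it correctly.
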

\begin{proof}
This result follows directly from the proof of Lemma~\ref{lemma:v2-sigma}.
\end{proof}
We wish to remind the reader that these arbitrage opportunities are clearly seen in Figure~\ref{fig:valuation}. As Corollary~\ref{cor:arb} specifies, the shape of Figure~\ref{fig:valuation} is general and not specific to the parameters chosen therein.

Before continuing, we want to provide a numerical example to demonstrate all possible outcomes for the set of implied volatilities. We will do this by considering the Polygon blockchain with different fee levels $\gamma$ to demonstrate the different possible settings for the implied volatility. In each of these cases, $\dt < \overline\dt$ by orders of magnitude.
\begin{example}\label{ex:impliedvol}
Consider a CPMM on the Polygon blockchain ($\dt = 2$ seconds). For these examples, recall that the risk-neutral valuation and other formulas employed throughout this work utilize the realized fee level $\hat\gamma = \gamma/(1-\gamma)$. 
For any choice of $\gamma$, there are two possible outcomes. The less interesting possibility is that $V_0(P) < 2\sqrt{P}$ so that depositing liquidity for the price of $2\sqrt{P}$ is expected to lose value, i.e., the expected discounted value of the fees would not cover this initial deposit value.\footnote{Within this work, we have assumed that it is not possible to sell liquidity tokens short and, therefore, no arbitrage opportunity exists if $V_0(P) < 2\sqrt{P}$.}
The more interesting scenario is where $V_0(P) \geq 2\sqrt{P}$ which (with a strict inequality) results in an arbitrage opportunity since the purchase price of the liquidity tokens $2\sqrt{P}$ is below the risk-neutral valuation of the fees. Therefore, in this latter scenario, a rational investor can deposit the liquidity for $2\sqrt{P}$ and, with appropriate hedging, obtain the value $V_0(P)$.\footnote{We refer the interested reader to \cite[Chapter 1.4]{karatzas1998methods} for an initial discussion on arbitrage.}
\begin{figure}[h!]
\centering
\begin{subfigure}[t]{0.45\textwidth}
\includegraphics[width=\textwidth]{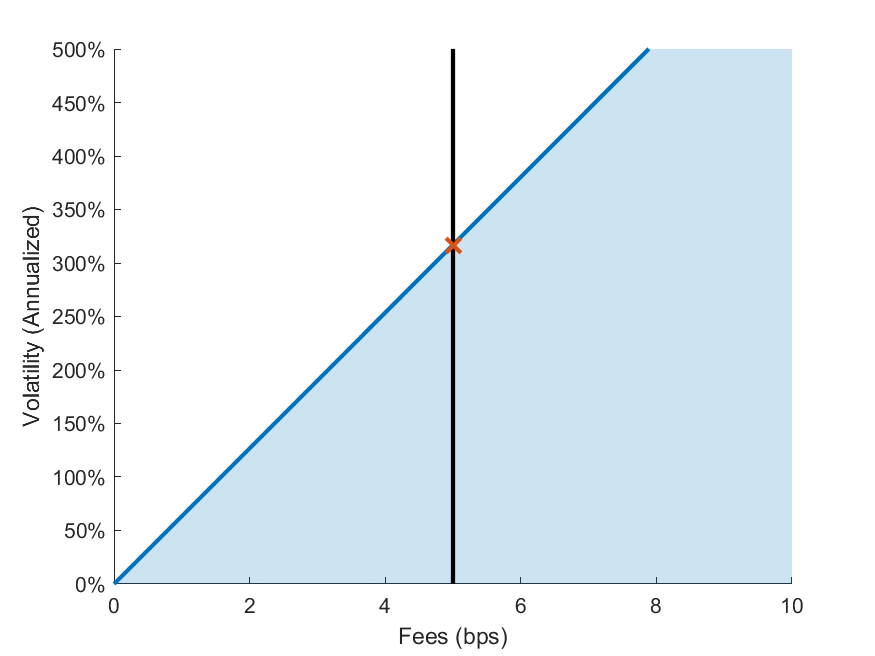}
\caption{$r = 0\%$ with marked point indicating the implied volatility $\sigstr$ at $\gamma = 5bps$.}
\label{fig:impliedvol_r=0}
\end{subfigure}
~
\begin{subfigure}[t]{0.45\textwidth}
\centering
\includegraphics[width=\textwidth]{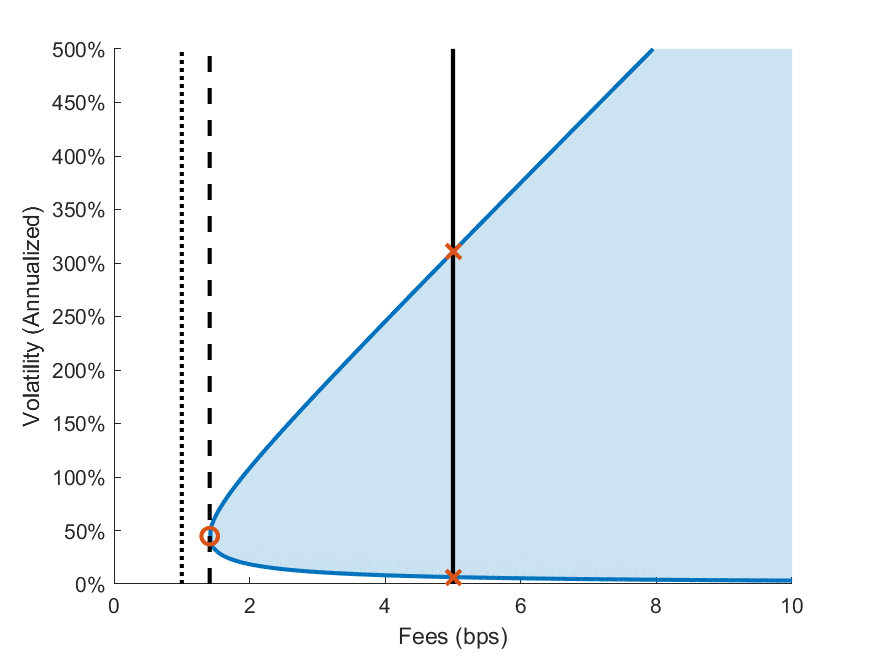}
\caption{$r = 5\%$ (annualized) with marked points indicating the implied volatilities $\sigstr$ at $\gamma \in \{1bps,1.14114bps,5bps\}$.}
\label{fig:impliedvol_r>0}
\end{subfigure}
\caption{Example~\ref{ex:impliedvol}: The shaded region indicates fee-volatility $(\gamma,\sigma)$ pairs that provide arbitrage opportunities.
}
\label{fig:impliedvol}
\end{figure}
\begin{itemize}
\item If $r = 0\%$ and $\gamma = 5bps$ then, by Lemma~\ref{lemma:v2-sigma}, there exists a unique implied volatility $\sigstr \approx 316.75\%$ (annualized). Notably, as proven in Corollary~\ref{cor:arb} and shown in Figure~\ref{fig:impliedvol_r=0}, if $\sigma < \sigstr$ then $V(P_0) > 2\sqrt{P_0}$ and a risk-neutral investor would deposit liquidity at the AMM. In contrast, if $\sigma > \sigstr$, then $V(P_0) < 2\sqrt{P_0}$ and a risk-neutral investor would not pool liquidity at the AMM.
\item If $r = 5\%$ (annualized) and $\gamma = 1bps$ then $\overline\dt \approx 8.48$ hours and $\bar\sigma \approx 0.3168$ with $\gamstr(\bar\sigma) \approx 1.4962bps > \hat\gamma$. Therefore, there does not exist an implied volatility at this fee level, i.e., a risk-neutral investor would never deposit liquidity at the AMM. 
This is shown in Figure~\ref{fig:impliedvol_r>0} by the dotted line.
\item If $r = 5\%$ (annualized) and $\gamma \approx 1.4114bps$ (so that $\hat\gamma \approx 1.4116bps$) then $\overline\dt \approx 11.97$ hours and $\bar\sigma \approx 0.4472$ with $\gamstr(\bar\sigma) = \hat\gamma$. Therefore, the unique implied volatility is given by $\sigstr = \bar\sigma \approx 44.72\%$ (annualized). 
Notably, if $\sigma \neq\sigstr$ then a risk-neutral investor would not pool liquidity at the AMM.
\item If $r = 5\%$ (annualized) and $\gamma = 5bps$ then $\overline\dt \approx 42.40$ hours and $\bar\sigma \approx 1.5846$ with $\gamstr(\bar\sigma) \approx 2.7002bps < \hat\gamma$. Therefore, there exists two implied volatilities:
$\sigstr_1 \approx 6.44 < \bar\sigma < \sigstr_2 \approx 310.47\%$ (annualized).  
Notably, as proven in Corollary~\ref{cor:arb} and shown Figure~\ref{fig:impliedvol_r>0}, if $\sigma \in (\sigstr_1,\sigstr_2)$ then $V(P_0) > 2\sqrt{P_0}$ and a risk-neutral investor would deposit liquidity at the AMM. In contrast, if $\sigma \in (0,\sigstr_1) \cup (\sigstr_2,\infty)$ then $V(P_0) < 2\sqrt{P_0}$ and a risk-neutral investor would not pool liquidity at the AMM.
\end{itemize}
\end{example}

\begin{remark}
As evidenced in Example~\ref{ex:impliedvol} above (and comparing to the parameters of, e.g., Example~\ref{ex:motivating}), non-uniqueness of the implied volatility can easily occur in practice when $r > 0$. It becomes important to understand which of $\{\sigstr_1,\sigstr_2\}$ should be quoted. Herein, as $\sigstr_2$ converges to the \emph{unique} solution $\sigstr$ when the risk-free rate approaches 0, i.e., $\lim_{r \searrow 0} \sigstr_2(r) = \sigstr(0)$ with dependence on the risk-free rate made explicit, we take this upper implied volatility to be the more meaningful setting. In comparison, the lower implied volatility $\sigstr_1$ converges to 0 as the risk-free rate approaches 0, i.e., $\lim_{r \searrow 0} \sigstr_1(r) = 0$.
\end{remark}

\subsection{Estimating the Arbitrage-Free Price of a Liquidity Token}\label{sec:repricing-vol}

In contrast to the market price $2\sqrt{P}$ utilized for the implied volatility in Section~\ref{sec:vol} above, herein we want to calibrate the pricing of the liquidity token to the observed data.
To do this, we will first find the volatility $\sigM > 0$ so that we observe martingale pricing in the data. 
Specifically, over a given historical time period (e.g., 1 month), we find the volatility $\sigM > 0$ so that the delta hedged position under pricing with $\gamstr(\sigM)$ at the end of the period has the same value as the start of the period. We note that this historically calibrated volatility differs from the more traditional historical volatility based on repeated snapshots of the data. To demonstrate the long-term stability of this method, we revisit Example~\ref{ex:motivating} to calibrate the volatility to empirical data. Within this example, we find that the proposed method can accurately re-price the liquidity token so as to effectively eliminate the arbitrage opportunities encountered in practice based on the prevailing market price of $2\sqrt{P}$ even months after the calibration period.
\begin{example}\label{ex:repricing}
Consider the USDC/WETH Uniswap v3 pool on the Polygon blockchain considered in Example~\ref{ex:motivating} (i.e., $\dt = 2$ seconds and $\gamma = 5bps$, i.e., $\hat\gamma \approx 5.0025bps$). As in Example~\ref{ex:motivating}, throughout this discussion we will set $r = 0\%$. 
Recall, also, from the first case in Example~\ref{ex:impliedvol} there exist a unique implied volatilities for the market price of $2\sqrt{P}$ given by $\sigstr \approx 316.75\%$ (annualized). 
Specifically, calibrating the volatility $\sigM > 0$ to January 2023, we find $\sigM \approx 143.75\%$ (annualized) which we then apply throughout the full 2023 calendar year. We note that, 
with this approximation of the calibrated volatility, we are able to re-price the liquidity token by noting that $\hat\gamma/\gamstr(\sigM) \approx 2.2048 > 1$, i.e., Uniswap is underpricing the liquidity token by a factor of $2.2048$.

In Figure~\ref{fig:repricing-hedge}, we compare the delta hedged position under this re-pricing (solid black line) compared to that of the original market price (dashed blue line); for direct comparisons, we assume 1 USDC was invested in the pool for both cases (yielding different number of liquidity tokens). Notably, the hedging error under repricing is an order of magnitude lower and fluctuates around the initial price of 1 compared to the market pricing (as observed already in Example~\ref{ex:motivating}). This minimal hedging error demonstrates that this updated pricing can be viewed as an arbitrage-free price of the liquidity token. Despite the nearly perfect hedge constructed for the re-priced CPMM, we note that there exist a select few times when its value jumps. These jumps correspond to times at which the WETH price experiences sudden, large movements which are not delta hedgeable, see Figure~\ref{fig:motivating-full}. As exected, the power of this repricing decreases as we move further into our out-of-sample period at the end of 2023 with a noticeable updward trend from October through December; recall we are using a constant calibrated volatility $\sigM$ based on only January's data rather than allowing it to fluctuate throughout the period. Alternative AMM constructions which permit variable implied volatilities are discussed in Section~\ref{sec:discussion}.
\begin{figure}[h!]
\centering
\includegraphics[width=0.6\textwidth]{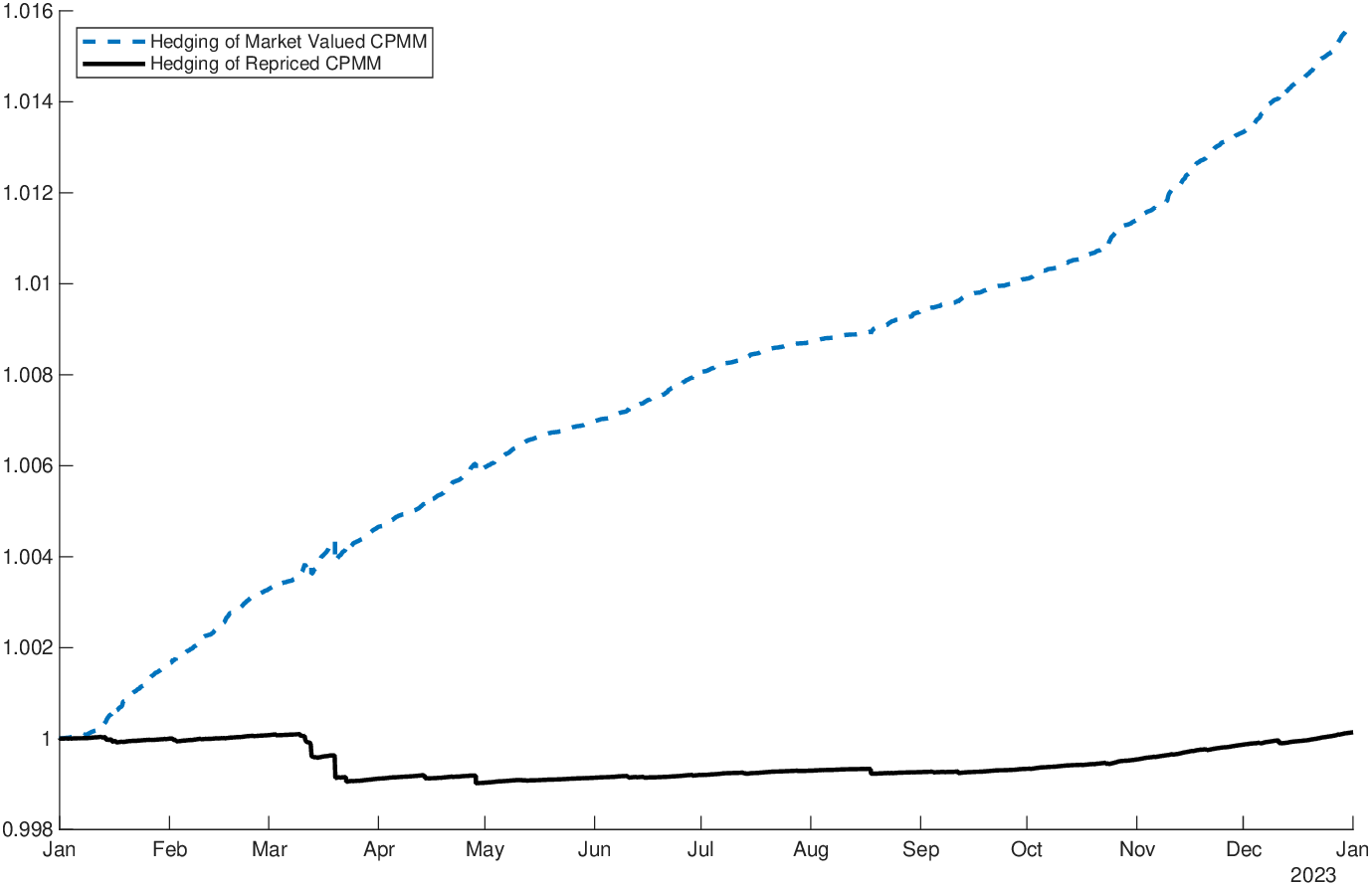}
\caption{Example~\ref{ex:repricing}: Comparison of the delta hedged position of 1 USDC investment in a liquidity token under market pricing (blue dashed line) and with the calibrated arbitrage-free price (black solid line). Volatility is calibrated on January 2023 and applied throughout calendar year 2023.}
\label{fig:repricing-hedge}
\end{figure}
\end{example}

\begin{remark}
Herein we have focused on hedging the liquidity token itself rather than the impermanent loss as in, e.g.,~\cite{fukasawa2023weighted,lipton2024unified}. As the impermanent loss is just the difference between the value of the liquidity token (including any collected fees) and the value of the initial position $P_t x_0 + e^{rt} y_0 = P_t/\sqrt{P_0} + e^{rt} \sqrt{P_0}$ which was used to mint the liquidity token.\footnote{We wish to note that this construction may need adjustments if the price of the liquidity token grows beyond $2\sqrt{P_0}$ at time $0$ as we propose in Example~\ref{ex:repricing} above.} As the initial position is static, hedging the impermanent loss reduces to appropriately hedging the liquidity token itself and, therefore, our results easily generalize to this more widely studied problem.
\end{remark}

\section{Discussion}\label{sec:discussion}
Given that we found that the prevailing market prices for CPMMs can exhibit arbitrage opportunities, it is important to construct new AMM designs that permit freely floating pricing for liquidity tokens. That is, where the price of a liquidity token depends explicitly on the number of outstanding tokens which does not exist at present. This would also permit, e.g., varying implied volatilities instead of the flat implied volatility based on the current pricing scheme (as given in Section~\ref{sec:vol}). 

Notably, if a secondary market were created for liquidity tokens, so long as the CPMM quotes a price of $2\sqrt{P}$ then, through a no-arbitrage argument, the prices would never vary from that level. To accomplish a meaningful, freely floating price, the CPMM would either need to vary the price of liquidity tokens internally or vary the fee rate being charged to swappers. In either case, so that no external oracles are required by the CPMM smart contract, these constructions need to be dependent on the number of outstanding liquidity tokens $L > 0$. Below we explore these two novel pricing frameworks.

\paragraph{Variable minting/burning costs $V(P,L)$:} 
Following the mispricing identified in Example~\ref{ex:repricing}, introducing variable minting or burning values of the next (marginal) liquidity token $V(P,L) = 2v(L)\sqrt{P}$ for some strictly increasing function $v: \R_{++} \to \R_{++}$ of the outstanding liquidity tokens $L > 0$. 
To recreate such a structure, the CPMM structure needs to be updated. Specifically, taking $xy = \ell(L)$ -- generalizing $xy = L^2$ as taken for CPMMs at present -- for $\ell: \R_{++} \to \R_{++}$ twice continuously differentiable with $\ell'(L) > 0$ and $2\ell(L)\ell''(L) > \ell'(L)^2$ for every $L > 0$ results in the updated pricing $v(L) = \frac{\ell'(L)}{2\sqrt{\ell(L)}}$. Note that this updated pricing, generally, depends explicitly on the outstanding liquidity tokens $L$.
For example, inspired by the classical quadratic approach to liquidity tokens, taking $\ell(L) = L^{2\alpha}$ for $\alpha > 1$ results in the updated pricing scheme with $v(L) = \alpha L^{\alpha-1}$. 
With this updated pricing, the price of liquidity tokens will fluctuate as the total amount of market liquidity changes. Akin to traditional derivatives markets, LPs can use the implied volatility (so that $V_0(P;\sigma) = 2v(L)\sqrt{P}$ to generalize the discussion of Section~\ref{sec:vol}) in order to evaluate the performance of their investment.
However, since other CPMMs exist in the market, a savvy investor would never pay more than $2\sqrt{P}$ for this contract. Therefore, though in theory having fully variable pricing is possible, it would never succeed unless, and until, all CPMMs allow for variable pricing.

\paragraph{Variable fee rate $\gamma(L)$:} 
To overcome the aforementioned issue in which the $2\sqrt{P}$ pricing at other CPMMs (such as at Uniswap pools) limits the liquidity available under variable minting costs, we can consider variable fee rates instead.
That is, consider a CPMM in which the fees charged are provided by the strictly decreasing mapping $\gamma: \R_{++} \to \R_{++}$ of the outstanding liquidity tokens $L > 0$. 
Now, instead of updating the market price $2\sqrt{P}$, the CPMM has an updated risk-neutral valuation $V_0(P,L) = 2\hat\gamma(L)\sqrt{P}/\gamstr$ per liquidity token as per Theorem~\ref{thm:v2}. Recall that $\gamstr$, as defined in \eqref{eq:gamstr}, is constant for fixed $r,\sigma$ and is independent of the liquidity $L$. With this construction, the ratio $\hat\gamma(L)/\gamstr$ provides a delineation of whether LPing is a good investment or not, assuming all future fees remain constant at $\gamma(L)$. As opposed to the constant fee setting, as investors may have different beliefs about volatility $\sigma$, they each estimate $\gamstr$ differently leading to equilibrium liquidity $L^*$; in this way, if an investor believe that $\hat\gamma(L^*) < \gamstr$ then they will not invest while an investor that believes that $\hat\gamma(L^*) > \gamstr$ will choose to increase the CPMM liquidity. 
We recommend that $\gamma_0 := \lim_{L \searrow 0} \gamma(L)$ is set sufficiently small (e.g., $30bps$ as was used in all Uniswap v2 pools) so as to encourage LPing 
and $\lim_{L \nearrow \infty} \gamma(L) = 0$ so that any reasonable price level can be supported. For example, $\gamma(L) = \gamma_0 \exp(-\alpha L)$ or $\gamma(L) = \gamma_0 (1 + \alpha L)^{-1}$ for parameter $\alpha > 0$ provides a control over the fee dependence on liquidity.
Following the logic of Section~\ref{sec:vol}, an implied volatility can be given that now has explicit dependence on the level of liquidity $L$ in the CPMM pool. 
The primary drawback to this construction is that the LP cannot know the realized fee rate for their investment when making the purchase; instead she will need to continuously review this investment to determine if it remains advantageous.

\section{Conclusion}\label{sec:conclusion} 
Evaluating data from a CPMM indicates that {hedging the liquidity token from, e.g., Uniswap, can result in arbitrage opportunities}. 
The constant pricing scheme considered in practice can be viewed akin to the pre-Black-Scholes world for derivatives. Within this work we have determined a risk-neutral pricing theory for CPMMs. With this theory we have revisited the data to determine an approximating arbitrage-free price. Furthermore, we propose two novel AMM designs so that the pricing of the liquidity tokens are variable in time based on the demand for such tokens.


Though we focused solely on the CPMM within this work, we conjecture similar mispricing  can be found in other AMM designs. In particular, we wish to highlight the concentrated liquidity designs of Uniswap v3; we conjecture that the optimal investment strategy for the concentrated liquidity would include finite stopping times which may complicate the risk-neutral pricing. In comparison to the flat implied volatility curve for the CPMM (see Section~\ref{sec:vol}), the concentrated liquidity structure would permit a DeFi implied volatility curve; such a structure could be of great interest for practitioners to more accurately price and hedge DeFi risks. We believe a study of such constructions would be of great interest.

Finally, in Section~\ref{sec:discussion}, we proposed two frameworks for variable pricing of liquidity tokens in a CPMM. As far as the authors are aware, no AMMs have implemented liquidity-adjusted pricing or fees. As such, there is no data in which to validate the performance of either proposed approach in practice. These constructions require further study and, potential, implementation. If implemented, an instantaneous volatility index can be plotted over time which could provide interesting insights for sophisticated investors; this is in contrast to the historical volatility or calibrated volatility (as presented in Section~\ref{sec:repricing-vol}) which are, inherently, backwards looking measures. 

\bibliographystyle{plain}
\small{\bibliography{bibtex2}}

\end{document}